\newtheorem{lem}{Lemma}
\newtheorem{cor}{Corollary}
\newtheorem{prop}{Proposition}
\theoremstyle{definition}
\newtheorem{definition}{Definition}
\newcounter{excnt}
\def \arxiv {1}
\title{On the Rate of Information Loss in Memoryless Systems}
\author{\IEEEauthorblockN{Bernhard C. Geiger\IEEEauthorrefmark{1}, Gernot Kubin\IEEEauthorrefmark{1}
\IEEEauthorblockA{\IEEEauthorrefmark{1}Signal Processing and Speech Communication Laboratory, Graz University of Technology, Austria}
$\{$geiger,g.kubin$\}$@ieee.org}}
\begin{document}
\newcounter{myTempCnt}

\newcommand{\x}[1]{x[#1]}
\newcommand{\y}[1]{y[#1]}

\newcommand{\pdfy}{f_Y(y)}

\newcommand{\overbar}[1]{\mkern 1.5mu\overline{\mkern-3mu#1\mkern-0.5mu}\mkern 1.5mu}

\newcommand{\ent}[1]{H(#1)}
\newcommand{\diffent}[1]{h(#1)}
\newcommand{\derate}[1]{\overbar{h}\left(\mathbf{#1}\right)}
\newcommand{\mutinf}[1]{I(#1)}
\newcommand{\ginf}[1]{I_G(#1)}
\newcommand{\kld}[2]{D(#1||#2)}
\newcommand{\kldrate}[2]{\bar{D}(\mathbf{#1}||\mathbf{#2})}
\newcommand{\binent}[1]{H_2(#1)}
\newcommand{\binentneg}[1]{H_2^{-1}\left(#1\right)}
\newcommand{\entrate}[1]{\overbar{H}(\mathbf{#1})}
\newcommand{\mutrate}[1]{\mutinf{\mathbf{#1}}}
\newcommand{\redrate}[1]{\bar{R}(\mathbf{#1})}
\newcommand{\pinrate}[1]{\vec{I}(\mathbf{#1})}
\newcommand{\loss}[2][\empty]{\ifthenelse{\equal{#1}{\empty}}{L(#2)}{L_{#1}(#2)}}
\newcommand{\lossrate}[2][\empty]{\ifthenelse{\equal{#1}{\empty}}{\overbar{L}(\mathbf{#2})}{L_{\mathbf{#1}}(\mathbf{#2})}}
\newcommand{\gain}[1]{G(#1)}
\newcommand{\atten}[1]{A(#1)}
\newcommand{\relLoss}[2][\empty]{\ifthenelse{\equal{#1}{\empty}}{l(#2)}{l_{#1}(#2)}}
\newcommand{\relLossrate}[1]{l(\mathbf{#1})}
\newcommand{\relTrans}[1]{t(#1)}
\newcommand{\partEnt}[2]{H^{#1}(#2)}

\newcommand{\dom}[1]{\mathcal{#1}}
\newcommand{\indset}[1]{\mathbb{I}\left({#1}\right)}

\newcommand{\unif}[2]{\mathcal{U}\left(#1,#2\right)}
\newcommand{\chis}[1]{\chi^2\left(#1\right)}
\newcommand{\chir}[1]{\chi\left(#1\right)}
\newcommand{\normdist}[2]{\mathcal{N}\left(#1,#2\right)}
\newcommand{\Prob}[1]{\mathrm{Pr}(#1)}
\newcommand{\Mar}[1]{\mathrm{Mar}(#1)}
\newcommand{\Qfunc}[1]{Q\left(#1\right)}

\newcommand{\expec}[1]{\mathrm{E}\left\{#1\right\}}
\newcommand{\expecwrt}[2]{\mathrm{E}_{#1}\left\{#2\right\}}
\newcommand{\var}[1]{\mathrm{Var}\left\{#1\right\}}
\renewcommand{\det}{\mathrm{det}}
\newcommand{\cov}[1]{\mathrm{Cov}\left\{#1\right\}}
\newcommand{\sgn}[1]{\mathrm{sgn}\left(#1\right)}
\newcommand{\sinc}[1]{\mathrm{sinc}\left(#1\right)}
\newcommand{\e}[1]{\mathrm{e}^{#1}}
\newcommand{\multint}{\iint{\cdots}\int}
\newcommand{\modd}[3]{((#1))_{#2}^{#3}}
\newcommand{\quant}[1]{Q\left(#1\right)}
\newcommand{\card}[1]{\mathrm{card}(#1)}
\newcommand{\diam}[1]{\mathrm{diam}(#1)}
\newcommand{\rec}[1]{r(#1)}
\newcommand{\recmap}[1]{r_{\mathrm{MAP}}(#1)}

\newcommand{\ivec}{\mathbf{i}}
\newcommand{\hvec}{\mathbf{h}}
\newcommand{\gvec}{\mathbf{g}}
\newcommand{\avec}{\mathbf{a}}
\newcommand{\kvec}{\mathbf{k}}
\newcommand{\fvec}{\mathbf{f}}
\newcommand{\vvec}{\mathbf{v}}
\newcommand{\xvec}{\mathbf{x}}
\newcommand{\Xvec}{\mathbf{X}}
\newcommand{\Xhvec}{\hat{\mathbf{X}}}
\newcommand{\xhvec}{\hat{\mathbf{x}}}
\newcommand{\xtvec}{\tilde{\mathbf{x}}}
\newcommand{\Yvec}{\mathbf{Y}}
\newcommand{\yvec}{\mathbf{y}}
\newcommand{\Zvec}{\mathbf{Z}}
\newcommand{\Svec}{\mathbf{S}}
\newcommand{\Nvec}{\mathbf{N}}
\newcommand{\Pvec}{\mathbf{P}}
\newcommand{\muvec}{\boldsymbol{\mu}}
\newcommand{\wvec}{\mathbf{w}}
\newcommand{\Wvec}{\mathbf{W}}
\newcommand{\Hmat}{\mathbf{H}}
\newcommand{\Amat}{\mathbf{A}}
\newcommand{\Fmat}{\mathbf{F}}

\newcommand{\zerovec}{\mathbf{0}}
\newcommand{\eye}{\mathbf{I}}
\newcommand{\evec}{\mathbf{i}}

\newcommand{\zeroone}{\left[\begin{array}{c}\zerovec^T\\ \eye\end{array} \right]}
\newcommand{\zerooneT}{\left[\begin{array}{cc}\zerovec & \eye\end{array} \right]}
\newcommand{\zerooneM}{\left[\begin{array}{cc}\zerovec &\zerovec^T\\\zerovec& \eye\end{array} \right]}

\newcommand{\Cxx}{\mathbf{C}_{XX}}
\newcommand{\Cx}{\mathbf{C}_{\Xvec}}
\newcommand{\Chx}{\hat{\mathbf{C}}_{\Xvec}}
\newcommand{\Cy}{\mathbf{C}_{\Yvec}}
\newcommand{\Cz}{\mathbf{C}_{\Zvec}}
\newcommand{\Cn}{\mathbf{C}_{\mathbf{N}}}
\newcommand{\Cnt}{\underline{\mathbf{C}}_{\tilde{\mathbf{N}}}}
\newcommand{\Cntm}{\underline{\mathbf{C}}_{\tilde{\mathbf{N}}}}
\newcommand{\Cxh}{\mathbf{C}_{\hat{X}\hat{X}}}
\newcommand{\rxx}{\mathbf{r}_{XX}}
\newcommand{\Cxy}{\mathbf{C}_{XY}}
\newcommand{\Cyy}{\mathbf{C}_{YY}}
\newcommand{\Cnn}{\mathbf{C}_{NN}}
\newcommand{\Cyx}{\mathbf{C}_{YX}}
\newcommand{\Cygx}{\mathbf{C}_{Y|X}}
\newcommand{\Wmat}{\underline{\mathbf{W}}}

\newcommand{\Jac}[2]{\mathcal{J}_{#1}(#2)}

\newcommand{\NN}{{N{\times}N}}
\newcommand{\perr}{P_e}
\newcommand{\perh}{\hat{\perr}}
\newcommand{\pert}{\tilde{\perr}}

\newcommand{\vecind}[1]{#1_0^n}
\newcommand{\roots}[2]{{#1}_{#2}^{(i_{#2})}}
\newcommand{\rootx}[1]{x_{#1}^{(i)}}
\newcommand{\rootn}[2]{x_{#1}^{#2,(i)}}

\newcommand{\markkern}[1]{f_M(#1)}
\newcommand{\pole}{a_1}
\newcommand{\preim}[1]{g^{-1}[#1]}
\newcommand{\preimV}[1]{\mathbf{g}^{-1}[#1]}
\newcommand{\Xmax}{\bar{X}}
\newcommand{\Xmin}{\underbar{X}}
\newcommand{\xmax}{x_{\max}}
\newcommand{\xmin}{x_{\min}}
\newcommand{\limn}{\lim_{n\to\infty}}
\newcommand{\limk}{\lim_{k\to\infty}}
\newcommand{\limX}{\lim_{\hat{\Xvec}\to\Xvec}}
\newcommand{\limx}{\lim_{\hat{X}\to X}}
\newcommand{\limXo}{\lim_{\hat{X}_1\to X_1}}
\newcommand{\sumin}{\sum_{i=1}^n}
\newcommand{\finv}{f_\mathrm{inv}}
\newcommand{\ejtheta}{\e{\jmath\theta}}
\newcommand{\khat}{\bar{k}}
\newcommand{\modeq}[1]{g(#1)}
\newcommand{\partit}[1]{\mathcal{P}_{#1}}
\newcommand{\psd}[1]{S_{#1}(\e{\jmath \theta})}
\newcommand{\borel}[1]{\mathfrak{B}(#1)}
\newcommand{\infodim}[1]{d(#1)}

\newcommand{\delay}[2]{\psblock(#1){#2}{\footnotesize$z^{-1}$}}
\newcommand{\Quant}[2]{\psblock(#1){#2}{\footnotesize$\quant{\cdot}$}}
\newcommand{\moddev}[2]{\psblock(#1){#2}{\footnotesize$\modeq{\cdot}$}}
\maketitle

\begin{abstract}
In this work we present results about the rate of (relative) information loss induced by passing a real-valued, stationary stochastic process through a memoryless system. We show that for a special class of systems the information loss rate is closely related to the difference of differential entropy rates of the input and output processes. It is further shown that the rate of (relative) information loss is bounded from above by the (relative) information loss the system induces on a random variable distributed according to the process's marginal distribution.

As a side result, in this work we present conditions such that for a Markovian input process also the output process possesses the Markov property.
\end{abstract}

\begin{IEEEkeywords}
 data processing inequality, information loss, entropy rate, R\'{e}nyi information dimension, system theory, lumpability
\end{IEEEkeywords}

\section{Introduction}\label{sec:Intro}
Signal processing, as defined by many textbooks, is related to the ``representation, transformation, and manipulation of signals and the \emph{information} the signals contain''~\cite[emphasis added]{Oppenheim_Discrete3}. Yet, most of these textbooks leave the notion of \emph{information} completely aside and focus, instead, on purely energetic aspects or second-order statistics: transfer functions for linear filters, their effect on the auto-correlation function of its output signal, and similar results for nonlinear, memoryless systems (e.g.,~\cite{Price_NLGaussianInput}) are popular characterizations. However, except for the purely Gaussian case, energy (or second-order statistics) and information show an inherently different behavior. It is therefore desirable to extend current system theory by information-theoretic aspects.

While the data processing inequality (e.g.,~\cite[p.~35]{Cover_Information2}) captures the fact that deterministic functions of random variables (RVs) destroy information, relatively little has been done to quantify this information loss. Pinsker showed that the entropy rate of a function of a stationary stochastic process on a finite alphabet is bounded from above by the entropy rate of the original process~\cite[Ch.~6.3]{Pinsker_InfoEngl}. Similarily, Watanabe and Abraham analyzed the rate of information loss for functions of stationary stochastic processes, introducing also a relative version of information loss in~\cite{Watanabe_InfoLoss}. Results on the information loss rate in dynamical systems, together with an upper bound, were presented in~\cite{Geiger_NLDyn1starXiv}. 

While these works focus on finite or countable alphabets,~\cite{Geiger_LossLong} analyzes the absolute and, in case the latter is infinite, relative information loss induced by passing a real-valued RV through a memoryless system. In this work we extend~\cite{Geiger_LossLong} to real-valued, stationary stochastic processes. In particular we show that the information loss for RVs distributed according to the marginal distribution of the process is an upper bound on the information loss rate (Section~\ref{sec:LossRate}). A similar result is shown also for the relative information loss rate, although there the bound is tight in many more cases (Section~\ref{sec:RelativeLossRate}): While redundancy helps to reduce the rate of information loss, it often fails to reduce the rate of \emph{relative} informtion loss. The connection between the rate of information loss and the differential entropy rates of the input and output processes shown in Section~\ref{sec:LossRate} is remarkably similar to the corresponding result for information loss presented in~\cite{Geiger_LossLong}.

In search for processes which are simple to analyze, we found a set of sufficient conditions such that for a Markovian input process also the output process has the Markov property (Section~\ref{sec:lumpability}). This extends the notion of \emph{lumpability} (cf.~\cite{Kemeny_FMC}) from discrete-time and continuous-time, homogeneous Markov chains to discrete-time, homogeneous, real-valued Markov processes. These conditions, together with our other theoretical findings, are illustrated with the help of examples in Section~\ref{sec:Examples}.

\ifthenelse{\arxiv=1}{}{Due to the lack of space some of the proofs are omitted in this conference paper, but can instead be found, together with additional examples, in its extended version~\cite{Geiger_LossRate_arXiv}.}

\section{Preliminaries \& Notation}\label{sec:prelim}
Throughout this work we consider discrete-time, stationary stochastic processes $\Xvec$ with alphabet $\dom{X}\subseteq\mathbb{R}$. Let $X_n$ be the $n$-th sample of the process, and let $X_i^j=\{X_i,X_{i+1},\dots,X_j\}$. By stationarity, the distribution of $X_n$, $P_{X_n}$, equals the marginal distribution $P_X$. We assume that $P_X$ is absolutely continuous w.r.t. the Lebesgue measure, and that it thus possesses a probability density function (PDF) $f_X$. Similarily, we assume that for all $n$, the joint PDF $f_{X_1^n}$ and the conditional PDF $f_{X_n|X_1^{n-1}}$ exist.

Let $\ent{\cdot}$, $\diffent{\cdot}$, $\entrate{\cdot}$, and $\derate{\cdot}$ denote the entropy, the differential entropy, the entropy rate, and the differential entropy rate of the RVs and stochastic processes in the argument (see~\cite{Cover_Information2} or~\cite{Papoulis_Probability} for definitions). We assume that the joint differential entropy of an arbitrary collection of RVs exists and is finite, and that also the entropy rate~\cite[Thm.~14.7]{Papoulis_Probability}
\begin{equation}
 \derate{X} := \limn \diffent{X_n|X_1^{n-1}} = \limn\frac{1}{n}\diffent{X_1^n}\label{eq:derate}
\end{equation}
exists and is finite. The logarithm for the entropies is taken to the base 2.

\section{Information Loss Rate Piecewise Bijective Functions}\label{sec:LossRate}
In this section we devote our attention to a specific class of functions for which the preimage of every point of its range is an at most countable set:

\begin{definition}[Piecewise Bijective Function]\label{def:PBF}
A piecewise bijective function $g{:}\  \dom{X}\to\dom{Y}$, $\dom{X},\dom{Y}\subseteq\mathbb{R}^N$, is a surjective, measurable function defined piecewise on an at most countable partition $\{\dom{X}_i\}$ of its domain:
\begin{equation}
 g(x) = \begin{cases}
  g_1(x), & \text{if } x \in\dom{X}_1\\
  g_2(x), & \text{if } x \in\dom{X}_2\\
  \vdots
 \end{cases}\label{eq:defg}
\end{equation}
where each $g_i{:}\ \dom{X}_i\to\dom{Y}_i$ is bijective. Furthermore, the derivative $g'$ exists on the closures of $\dom{X}_i$, and its magnitude is non-zero $P_X$-a.s.
\end{definition}
Feeding the stationary stochastic process $\Xvec$ through a memoryless system described by such a function $g$ gives rise to another stationary stochastic process $\Yvec$ defined by $Y_n:=g(X_n)$, which, intuitively, conveys less information. In order to analyze the amount of information lost \emph{per sample} we introduce
\begin{definition}[Information Loss Rate]\label{def:lossrate}
 The information loss rate is
\begin{equation}
 \lossrate{X\to Y} := \limn\frac{1}{n} \loss{X_1^n\to Y_1^n} = \limn\frac{1}{n} \ent{X_1^n|Y_1^n}
\end{equation}
i.e., the average of the block information loss.
\end{definition}

We showed in~\cite{Geiger_LossLong} that the information loss in systems described by functions satisfying Definition~\ref{def:PBF} can be computed as
\begin{equation}
 \loss{X\to Y} = \ent{X|Y}=\diffent{X}-\diffent{Y} + \expec{\log|g'(X)|}\label{eq:lossdiffent}
\end{equation}
where $Y=g(X)$ and where the expectation is taken w.r.t. $X$. We now present a corresponding result for stationary stochastic processes:
\begin{prop}[Information Loss Rate for PBFs]\label{prop:lossratePBFs}
The information loss rate induced by feeding a stationary stochastic process $\Xvec$ through a PBF $g$ is
\begin{equation}
 \lossrate{X \to Y} = \derate{X}-\derate{Y}  + \expec{\log|g'(X)|}.\label{eq:lossdiffentRate}
\end{equation}
\end{prop}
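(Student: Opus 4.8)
The plan is to apply the finite-block identity \eqref{eq:lossdiffent} to the vector-valued situation and then pass to the limit, term by term. First I would observe that the map $G{:}\ \dom{X}^n \to \dom{Y}^n$ defined by $G(x_1,\dots,x_n) = (g(x_1),\dots,g(x_n))$ is itself a piecewise bijective function in the sense of Definition~\ref{def:PBF}: its domain $\dom{X}^n\subseteq\mathbb{R}^n$ is partitioned into the at most countably many product cells $\dom{X}_{i_1}\times\cdots\times\dom{X}_{i_n}$, on each of which $G$ restricts to the bijection $g_{i_1}\times\cdots\times g_{i_n}$, and the Jacobian determinant of $G$ factors as $\prod_{k=1}^n g'(x_k)$, which is non-zero $P_{X_1^n}$-a.s. because each factor is non-zero $P_X$-a.s. and $P_{X_1^n}$ is absolutely continuous. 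Hence \eqref{eq:lossdiffent}, which is valid for PBFs on $\mathbb{R}^N$ for any $N$ by~\cite{Geiger_LossLong}, applies to $X_1^n$ and $Y_1^n=G(X_1^n)$ and gives
\begin{equation}
 \loss{X_1^n\to Y_1^n} = \diffent{X_1^n}-\diffent{Y_1^n} + \expec{\log\left|\det\mathcal{J}_G(X_1^n)\right|}.\label{eq:blockloss}
\end{equation}

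Next I would simplify the Jacobian term. Since $\det\mathcal{J}_G(x_1^n)=\prod_{k=1}^n g'(x_k)$, we have $\log|\det\mathcal{J}_G(X_1^n)| = \sum_{k=1}^n \log|g'(X_k)|$, and by linearity of expectation together with stationarity (each $X_k$ has marginal distribution $P_X$),
\begin{equation}
 \expec{\log\left|\det\mathcal{J}_G(X_1^n)\right|} = \sum_{k=1}^n \expec{\log|g'(X_k)|} = n\,\expec{\log|g'(X)|}.\label{eq:jactermsum}
\end{equation}
Dividing \eqref{eq:blockloss} by $n$ and using \eqref{eq:jactermsum},
\begin{equation}
 \frac{1}{n}\loss{X_1^n\to Y_1^n} = \frac{1}{n}\diffent{X_1^n}-\frac{1}{n}\diffent{Y_1^n} + \expec{\log|g'(X)|}.\label{eq:ratepren}
\end{equation}
Taking $n\to\infty$, the first term on the right converges to $\derate{X}$ and the second to $\derate{Y}$, both limits existing and being finite by the standing assumptions in Section~\ref{sec:prelim} (applied to $\Xvec$ and to the induced stationary process $\Yvec$), and the left-hand side converges to $\lossrate{X\to Y}$ by Definition~\ref{def:lossrate}. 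This yields \eqref{eq:lossdiffentRate}.

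The main obstacle I anticipate is not the algebra but the bookkeeping of existence and finiteness: one must check that $\diffent{Y_1^n}$ and the differential entropy rate $\derate{Y}$ are well-defined and finite so that the limit in \eqref{eq:ratepren} can be split into the difference of two convergent limits rather than an indeterminate $\infty-\infty$. This requires noting that $g$ being a PBF forces $P_{Y_n}$ (and the joint laws $P_{Y_1^n}$) to be absolutely continuous with densities obtained via the change-of-variables formula summed over preimage branches, and that finiteness of $\expec{\log|g'(X)|}$ together with finiteness of $\diffent{X_1^n}$ transfers, through \eqref{eq:blockloss} and the non-negativity of the conditional entropy $\loss{X_1^n\to Y_1^n}=\ent{X_1^n|Y_1^n}\ge 0$, to an upper bound on $\diffent{Y_1^n}$; the lower bound and the existence of the rate limit follow from the same subadditivity argument that underlies \eqref{eq:derate}. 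Once these regularity facts are in place — and they are essentially the $n$-dimensional echo of the hypotheses already imposed for $N$-dimensional RVs in~\cite{Geiger_LossLong} — the term-by-term passage to the limit is immediate.
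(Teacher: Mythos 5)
Your proposal is correct and follows essentially the same route as the paper's own proof: interpret $X_1^n$ as an $n$-dimensional RV, apply the multivariate version of~\eqref{eq:lossdiffent} to the coordinate-wise extension $g^n$ whose Jacobian is diagonal, use stationarity to collapse the Jacobian term to $n\,\expec{\log|g'(X)|}$, then divide by $n$ and take the limit. Your added remarks on the existence and finiteness of $\diffent{Y_1^n}$ and $\derate{Y}$ are a reasonable supplement to what the paper simply absorbs into the standing assumptions of Section~\ref{sec:prelim}.
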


\begin{IEEEproof}
 For the proof we note that the $n$ RVs $X_1^n$ can be interpreted as a single, $n$-dimensional RV; similarily, we can define an extended function $g^n{:}\ \dom{X}^n\to\dom{Y}^n$, applying $g$ coordinate-wise. The Jacobian matrix of $g^n$ is a diagonal matrix constituted of the elements $g'(x_i)$. With the extension of~\eqref{eq:lossdiffentRate} to multivariate functions we thus obtain~\cite{Geiger_LossLong}
\begin{multline}
 \loss{X_1^n\to Y_1^n} = \diffent{X_1^n}-\diffent{Y_1^n}+\expec{\log\left|\prod_{i=1}^n g'(X_i)\right|}\\
=\diffent{X_1^n}-\diffent{Y_1^n}+n\expec{\log\left|g'(X)\right|}
\end{multline}
where the first line is because the determinant of a diagonal matrix is the product of its diagonal elements, and where we employed stationarity of $\Xvec$ to obtain the second line. Dividing by $n$ and taking the limit completes the proof.
\end{IEEEproof}

In~\cite{Geiger_LossLong} we showed that the information loss of a cascade of systems equals the sum of the information losses induced in the systems constituting the cascade. Indeed, this result can be carried over to the information loss rate as well:
\begin{prop}[Cascade of Systems]\label{prop:rateCascade}
 Let $\Xvec$ be fed through a PBF $g$ to obtain $\Yvec$, and let $\Yvec$ be fed through a PBF $h$ to obtain $\Zvec$. The information loss rate of the cascade is given as the sum of the individual information loss rates:
\begin{equation}
 \lossrate{X\to Z}=\lossrate{X\to Y}+\lossrate{Y\to Z}.
\end{equation}
\end{prop}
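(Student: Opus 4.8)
The plan is to lift the cascade identity for the information loss of random variables---already established in~\cite{Geiger_LossLong}---to the block level and then average over the block length, in complete analogy with the proof of Proposition~\ref{prop:lossratePBFs}. First I would fix a block length $n$ and regard $X_1^n$, $Y_1^n$, $Z_1^n$ as single ($n$-dimensional) RVs, linked by the coordinate-wise extensions $g^n$ and $h^n$; as noted in the proof of Proposition~\ref{prop:lossratePBFs}, these extensions are again PBFs, with diagonal Jacobians whose determinants are $P$-a.s.\ non-zero. The RV cascade result of~\cite{Geiger_LossLong} applied to the chain $X_1^n \to Y_1^n \to Z_1^n$ then gives the block identity
\[
 \loss{X_1^n\to Z_1^n} = \loss{X_1^n\to Y_1^n} + \loss{Y_1^n\to Z_1^n}.
\]
If a self-contained argument is preferred, the same identity follows directly from the chain rule for conditional entropy: since $Y_1^n$ is a deterministic function of $X_1^n$ and $Z_1^n$ a deterministic function of $Y_1^n$,
\[
 \ent{X_1^n|Z_1^n} = \ent{X_1^n,Y_1^n|Z_1^n} = \ent{Y_1^n|Z_1^n} + \ent{X_1^n|Y_1^n,Z_1^n} = \ent{Y_1^n|Z_1^n} + \ent{X_1^n|Y_1^n},
\]
the last step using that $Z_1^n$ carries no information beyond $Y_1^n$; all three quantities are finite under the standing assumptions (cf.~\eqref{eq:lossdiffent}).

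Dividing the block identity by $n$ and letting $n\to\infty$, the two terms on the right converge, by Proposition~\ref{prop:lossratePBFs}, to $\lossrate{X\to Y}$ and $\lossrate{Y\to Z}$; hence the left-hand side converges as well, and by Definition~\ref{def:lossrate} its limit is precisely $\lossrate{X\to Z}$, so the asserted additivity follows. A slicker, essentially equivalent route is to verify that the composition $h\circ g$ is itself a PBF: its domain is partitioned by the common refinement of $\{\dom{X}_i\}$ with the $g$-preimages of the cells on which $h$ is bijective (still at most countable), on each cell $h\circ g$ is a composition of bijections, and by the chain rule its derivative $h'(g(x))\,g'(x)$ is $P_X$-a.s.\ non-zero---here one uses that $Y=g(X)$ inherits a PDF, so $h'(Y)\neq 0$ a.s. One then applies~\eqref{eq:lossdiffentRate} to $h\circ g$ and uses $\log|(h\circ g)'(x)| = \log|h'(g(x))| + \log|g'(x)|$ together with $\expec{\log|h'(g(X))|}=\expec{\log|h'(Y)|}$; the differential entropy rate $\derate{Y}$ telescopes out and the claim drops out.

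Neither route involves any real difficulty. The only points that deserve a word of care are the existence of the per-sample limit on the left of the block identity---which is automatic once the block identity and Proposition~\ref{prop:lossratePBFs} are in hand---and, for the alternative route, checking that the composition genuinely inherits all defining properties of a PBF, in particular that the composed derivative does not vanish on a set of positive $P_X$-measure. I expect the write-up to be short.
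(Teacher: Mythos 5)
Your proposal is correct, and your second, ``slicker'' route is precisely the paper's proof: the paper observes that the cascade is described by $h\circ g$, applies \eqref{eq:lossdiffentRate} to $g$, $h$, and $h\circ g$, and uses $\expec{\log|(h\circ g)'(X)|}=\expec{\log|g'(X)|}+\expec{\log|h'(Y)|}$ so that $\derate{Y}$ telescopes out. (The paper states this in two lines and does not explicitly verify that $h\circ g$ is again a PBF or that $h'(Y)\neq 0$ a.s.; the checks you flag are exactly the ones being glossed over.) Your primary route is genuinely different and arguably more robust: by establishing the exact block identity $\ent{X_1^n|Z_1^n}=\ent{Y_1^n|Z_1^n}+\ent{X_1^n|Y_1^n}$ from the Markov chain $X_1^n\to Y_1^n\to Z_1^n$ and the chain rule for (discrete) conditional entropy, you get additivity for every finite $n$ without invoking the differential-entropy representation at all, and then only need Proposition~\ref{prop:lossratePBFs} (or the mere existence of two of the three limits) to pass to the rate. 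That buys independence from the smoothness bookkeeping for $h\circ g$, at the cost of having to justify that the conditional entropies involved are finite discrete entropies (which holds because preimages under PBFs are at most countable). Either write-up would be acceptable; the paper's is shorter, yours is more self-contained.
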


\begin{IEEEproof}
 The proof follows from the fact that the cascade is described by the function $h\circ g$, and that
\begin{multline}
 \expec{\log|(h\circ g)'(X)|} = \expec{\log|g'(X)h'(g(X))|}\\
=\expec{\log|g'(X)|}+\expec{\log|h'(Y)|}.
\end{multline}
\end{IEEEproof}

It is often not possible to obtain closed-form expressions for the information loss rate induced by a system. Moreover, estimating the information loss rate by simulations soon suffers the curse of dimensionality, as, in principle, infinitely long random sequences have to be drawn and averaged. Much simpler is an estimation of the information loss, since a single realized, sufficiently long sequence allows for an estimation of the latter. As the next proposition shows, this relatively simple estimation delivers an upper bound on the information loss rate:
\begin{prop}[Loss $>$ Loss Rate]\label{prop:lossgelossrate}
 Let $\Xvec$ be a stationary stochastic process and $X$ an RV distributed according to the process's marginal distribution. The information loss induced by feeding $X$ through a PBF $g$ is an upper bound on the information loss rate induced by passing $\Xvec$ through $g$, i.e.,
\begin{equation}
 \lossrate{X\to Y} \le \loss{X\to Y}.
\end{equation}
\end{prop}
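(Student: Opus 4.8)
The plan is to combine the two closed-form expressions already at hand: equation~\eqref{eq:lossdiffent} for the loss of the marginal RV, $\loss{X\to Y}=\diffent{X}-\diffent{Y}+\expec{\log|g'(X)|}$, and Proposition~\ref{prop:lossratePBFs} for the loss rate, $\lossrate{X\to Y}=\derate{X}-\derate{Y}+\expec{\log|g'(X)|}$. Subtracting the latter from the former, the terms $\expec{\log|g'(X)|}$ cancel and the claim becomes equivalent to
\begin{equation*}
 \diffent{X}-\derate{X}\ \ge\ \diffent{Y}-\derate{Y},
\end{equation*}
i.e., to the statement that the gap between the marginal differential entropy and the differential entropy rate --- the \emph{redundancy rate} of the process --- cannot be enlarged by a memoryless PBF. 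This is a data-processing-type assertion: a static map cannot create temporal dependence where there was less before.

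To prove this reduced inequality I would first write both sides as limits of mutual informations. By stationarity $\diffent{X_n}=\diffent{X}$ for every $n$, and by~\eqref{eq:derate} one has $\diffent{X_n|X_1^{n-1}}\to\derate{X}$, hence
\begin{equation*}
 \diffent{X}-\derate{X}=\limn\left(\diffent{X_n}-\diffent{X_n|X_1^{n-1}}\right)=\limn\mutinf{X_n;X_1^{n-1}},
\end{equation*}
and in the same way $\diffent{Y}-\derate{Y}=\limn\mutinf{Y_n;Y_1^{n-1}}$. It then suffices to show $\mutinf{Y_n;Y_1^{n-1}}\le\mutinf{X_n;X_1^{n-1}}$ for each $n$ and to pass to the limit. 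This per-$n$ inequality follows from two applications of the data processing inequality along the chain $Y_1^{n-1}-X_1^{n-1}-X_n-Y_n$: since $Y_1^{n-1}$ is a deterministic function of $X_1^{n-1}$ we get $\mutinf{Y_n;Y_1^{n-1}}\le\mutinf{Y_n;X_1^{n-1}}$, and since $Y_n=g(X_n)$ is a deterministic function of $X_n$ we get $\mutinf{Y_n;X_1^{n-1}}\le\mutinf{X_n;X_1^{n-1}}$; chaining the two gives the bound.

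The step that requires the most care --- and essentially the only obstacle --- is bookkeeping on existence and finiteness. One must check that $\Yvec$ is itself a stationary process admitting a marginal density (which holds because $g$ is piecewise bijective with $|g'|\ne0$ $P_X$-a.s.), and that $\diffent{Y}$ and $\derate{Y}$ are finite (the former via~\eqref{eq:lossdiffent}, the latter by the same limiting argument that was assumed for $\Xvec$), so that the limit defining $\derate{Y}$ exists and the cancellation of the $\expec{\log|g'(X)|}$ terms is legitimate; the mutual informations above are then finite as differences of finite differential entropies, which is what justifies splitting the limit into a difference of limits.

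An alternative, self-contained route that bypasses Proposition~\ref{prop:lossratePBFs} is to show directly that $a_n:=\ent{X_1^n|Y_1^n}$ is subadditive. Writing $\ent{X_1^{n+m}|Y_1^{n+m}}=\ent{X_1^n|Y_1^{n+m}}+\ent{X_{n+1}^{n+m}|X_1^n,Y_1^{n+m}}$ and bounding each summand by discarding part of the conditioning --- $\ent{X_1^n|Y_1^{n+m}}\le\ent{X_1^n|Y_1^n}$ and $\ent{X_{n+1}^{n+m}|X_1^n,Y_1^{n+m}}\le\ent{X_{n+1}^{n+m}|Y_{n+1}^{n+m}}$ --- together with stationarity yields $a_{n+m}\le a_n+a_m$. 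Fekete's lemma then gives $\lossrate{X\to Y}=\limn a_n/n=\inf_n a_n/n\le a_1=\loss{X\to Y}$.
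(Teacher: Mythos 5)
Your proposal is correct, but it does not follow the paper's route. The paper's own proof is a three-line direct argument: expand $\ent{X_1^n|Y_1^n}=\sum_{i=1}^n\ent{X_i|X_1^{i-1},Y_1^n}$ by the chain rule, bound each summand by $\ent{X_i|Y_i}$ by discarding part of the conditioning, and invoke stationarity so that each term equals $\loss{X\to Y}$; the case $\loss{X\to Y}=\infty$ is dispatched separately at the outset. Your first route instead cancels $\expec{\log|g'(X)|}$ between \eqref{eq:lossdiffent} and Proposition~\ref{prop:lossratePBFs} and reduces the claim to the statement that a memoryless map cannot increase the redundancy $\diffent{X}-\derate{X}$, which you then prove via $\mutinf{Y_n;Y_1^{n-1}}\le\mutinf{X_n;X_1^{n-1}}$ by two applications of the data processing inequality. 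This is a genuinely different and conceptually appealing argument---it identifies $\loss{X\to Y}-\lossrate{X\to Y}$ exactly as the redundancy that the map destroys---but it leans on the validity of both closed forms and hence on more finiteness bookkeeping (which you correctly flag), and it implicitly excludes the $\loss{X\to Y}=\infty$ case, which is, however, trivial. Your second route (subadditivity of $\ent{X_1^n|Y_1^n}$ plus Fekete's lemma) uses exactly the paper's ingredients---chain rule and the fact that conditioning reduces entropy---but packages them block-wise rather than sample-wise; its added value is that it establishes that the limit in Definition~\ref{def:lossrate} actually exists, something the paper's proof takes for granted.
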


\ifthenelse{\arxiv=1}{
\begin{IEEEproof}
The inequality holds trivially if $\loss{X\to Y}=\infty$. The rest of the proof follows from the chain rule and the fact that conditioning reduces entropy:
\begin{IEEEeqnarray}{RCL}
 \lossrate{X\to Y} &=& \limn \frac{1}{n} \ent{X_1^n|Y_1^n}\\
&=& \limn \frac{1}{n} \sum_{i=1}^n \ent{X_i|X_{1}^{i-1},Y_1^n}\\
&\le& \limn \frac{1}{n} \sum_{i=1}^n \ent{X_i|Y_i}\\
&=& \loss{X\to Y}.
\end{IEEEeqnarray}
\end{IEEEproof}}
{The proof employs the chain rule and the fact that conditioning reduces entropy and is given in~\cite{Geiger_LossRate_arXiv}.}
Clearly, this bound is tight whenever the input process $\Xvec$ is an iid process. Moreover, it is trivially tight whenever the function is bijective, i.e., when $\loss{X\to Y}=0$. In \ifthenelse{\arxiv=1}{Section~\ref{ssec:exampleTight}}{\cite{Geiger_LossRate_arXiv}} we present an example which renders this bound tight in the general case.

Intuitively, this bound suggests that redundancy of a process, i.e., the statistical dependence of its samples, reduces the amount of information lost \emph{per sample} when fed through a deterministic system. The same connection between information loss and information loss rate has already been observed in~\cite{Watanabe_InfoLoss} for stationary stochastic processes with finite alphabets.

The next bound again extends a result from~\cite{Geiger_LossLong}, bounding the information loss rate by the entropy rate of a stationary stochastic process on an at most countable alphabet. As such, it presents a different way to estimate the information loss rate efficiently using numerical simulations.
\begin{prop}[Upper Bound]\label{prop:WBound}
 Let $\Wvec$ be a stationary stochastic process defined by $W_n:=i$ if $X_n\in\dom{X}_i$. Then, 
\begin{equation}
 \lossrate{X \to Y} = \entrate{W|Y} \le \entrate{W}.
\end{equation}
\end{prop}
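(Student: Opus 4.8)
Proposition~\ref{prop:WBound} rests on the multivariate counterpart of an identity already established in \cite{Geiger_LossLong}: if an (possibly vector-valued) RV $X$ is fed through a PBF $g$ with partition $\{\dom{X}_i\}$ and we let $W$ be its partition-index RV ($W=i$ iff $X\in\dom{X}_i$), then $\loss{X\to Y}=\ent{W|Y}\le\ent{W}$. The plan is first to observe, exactly as in the proof of Proposition~\ref{prop:lossratePBFs}, that the coordinate-wise extension $g^n{:}\ \dom{X}^n\to\dom{Y}^n$ is itself a PBF, now with respect to the (still at most countable, since $n$ is finite) product partition $\{\dom{X}_{i_1}\times\cdots\times\dom{X}_{i_n}\}$ of $\dom{X}^n$. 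The partition index of the vector RV $X_1^n$ under this product partition is precisely the tuple $(W_1,\dots,W_n)=W_1^n$, and the output is $Y_1^n$. Applying the above identity to $X_1^n$ therefore gives
\begin{equation}
 \loss{X_1^n\to Y_1^n}=\ent{W_1^n|Y_1^n}\le\ent{W_1^n}.
\end{equation}
(Equivalently, this follows directly: $W_1^n$ is a deterministic function of $X_1^n$, while conditioned on $(W_1^n,Y_1^n)$ each $X_i$ equals $g_{W_i}^{-1}(Y_i)$ and is thus determined, so $\ent{X_1^n|Y_1^n}=\ent{X_1^n,W_1^n|Y_1^n}=\ent{W_1^n|Y_1^n}$, and conditioning reduces entropy yields the inequality.)

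Next I would divide by $n$ and let $n\to\infty$. By Definition~\ref{def:lossrate} the left-hand side converges to $\lossrate{X\to Y}$, and $\tfrac1n\ent{W_1^n}$ converges to the entropy rate $\entrate{W}$ of the stationary, countable-alphabet process $\Wvec$ (this limit exists in $[0,\infty]$ by the usual subadditivity of $\ent{W_1^n}$; if it is infinite the claimed bound is trivial). It remains to show that $\limn\tfrac1n\ent{W_1^n|Y_1^n}$ exists and defines $\entrate{W|Y}$. For this I would check that $a_n:=\ent{W_1^n|Y_1^n}$ is subadditive: for $n,m\ge1$,
\begin{multline}
 a_{n+m}\le\ent{W_1^n|Y_1^{n+m}}+\ent{W_{n+1}^{n+m}|Y_1^{n+m}}\\
 \le\ent{W_1^n|Y_1^n}+\ent{W_{n+1}^{n+m}|Y_{n+1}^{n+m}}=a_n+a_m,
\end{multline}
using the chain rule, that conditioning reduces entropy, and joint stationarity of $(\Wvec,\Yvec)$. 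Fekete's lemma then gives $\entrate{W|Y}:=\limn\tfrac1n a_n=\inf_n\tfrac1n a_n$. Passing to the limit in the displayed chain of (in)equalities yields $\lossrate{X\to Y}=\entrate{W|Y}\le\entrate{W}$.

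The only genuinely delicate point is the opening identity $\loss{X_1^n\to Y_1^n}=\ent{W_1^n|Y_1^n}$: although $X_1^n$ is absolutely continuous, conditioned on $Y_1^n$ it is supported on the at most countable preimage set $\prod_{i}\preim{Y_i}$, so every entropy appearing above is an ordinary (discrete) entropy and the chain-rule manipulations are legitimate. This is exactly the setting treated in \cite{Geiger_LossLong}; here it requires nothing more than recognizing that the index process of the product partition is $\Wvec$. The remaining steps are a routine subadditivity/Fekete argument, so I do not expect any real obstacle beyond this bookkeeping.
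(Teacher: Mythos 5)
Your proposal is correct and takes essentially the same route as the paper: the key step in both is the identity $\ent{X_1^n|Y_1^n}=\ent{W_1^n|Y_1^n}$ obtained by viewing $X_1^n$ as an $n$-dimensional RV quantized by the product partition induced by $g^n$ (whose index RV is exactly $W_1^n$) and invoking the result of \cite{Geiger_LossLong}, followed by ``conditioning reduces entropy'' and division by $n$. The only difference is that you additionally justify the existence of the limit defining $\entrate{W|Y}$ via subadditivity and Fekete's lemma, a detail the paper leaves implicit; that argument is sound.
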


\begin{proof}
 We again treat $X_1^n$ as an $n$-dimensional RV; $g^n$ induces a partition of its domain $\dom{X}^n$, which is equivalent to the $n$-fold product of the partition $\{\dom{X}_i\}$. Letting $\tilde{W}$ be the RV obtained by quantizing $X_1^n$ according to this partition, it is easy to see that $W_1^n$ is equivalent to $\tilde{W}$. Thus, with~\cite{Geiger_LossLong},
\begin{equation}
 \ent{X_1^n|Y_1^n} = \ent{\tilde{W}|Y_1^n} = \ent{W_1^n|Y_1^n}
\end{equation}
for all $n$. This, together with the fact that conditioning reduces entropy, completes the proof.
\end{proof}

For the case that the input process is a Markov process, i.e., if $f_{X_n|X_1^{n-1}}=f_{X_n|X_{n-1}}$ for all $n$, an additional, sharper, upper bound can be presented:
\begin{prop}[Upper Bound for Markovian $\Xvec$]\label{prop:UBMarkov}
 Let $\Xvec$ be a Markov process, and let $\Wvec$ be as in Proposition~\ref{prop:WBound}. Then, for finite $\loss{X \to Y}$,
\begin{equation}
 \lossrate{X \to Y} \le \ent{W_2|X_1}.
\end{equation}
\end{prop}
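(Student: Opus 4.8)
The plan is to transport the problem from the continuous process $\Xvec$ to the discrete index process $\Wvec$ and then use nothing beyond the chain rule, ``conditioning reduces entropy'', and stationarity. As a first step I would recall the identity already established in the proof of Proposition~\ref{prop:WBound}, namely $\ent{X_1^n|Y_1^n}=\ent{W_1^n|Y_1^n}$ for every $n$, so that $\lossrate{X\to Y}=\limn\frac1n\ent{W_1^n|Y_1^n}$. The structural fact behind this is that for a PBF each $g_i$ is a bijection $\dom{X}_i\to\dom{Y}_i$, so the pair $(W_n,Y_n)$ determines $X_n$ ($P_X$-almost surely, via $X_n=g_{W_n}^{-1}(Y_n)$), while $X_n$ trivially determines $(W_n,Y_n)$; hence $(W_n,Y_n)$ and $X_n$ generate the same information, and in particular $\{(W_n,Y_n)\}$ is again a Markov process.

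Next I would apply the chain rule, $\ent{W_1^n|Y_1^n}=\sum_{i=1}^n\ent{W_i|W_1^{i-1},Y_1^n}$, and bound the summands individually. For $i\ge 2$, discarding every conditioning variable except $W_{i-1}$ and $Y_{i-1}$ can only enlarge the entropy, and since $(W_{i-1},Y_{i-1})$ carries exactly the same information as $X_{i-1}$,
\begin{equation}
 \ent{W_i|W_1^{i-1},Y_1^n}\le\ent{W_i|W_{i-1},Y_{i-1}}=\ent{W_i|X_{i-1}}=\ent{W_2|X_1},
\end{equation}
the last step by stationarity of $\Xvec$. (If one wants to bring in the Markov hypothesis explicitly, one may instead first replace the conditioning on $(W_1^{i-1},Y_1^n)$ by conditioning on the ``present state'' $X_{i-1}$ together with the future $Y_i^n$, and then drop $Y_i^n$; this is the more symmetric route but gives the same bound.) For the single boundary term $i=1$ I would write $\ent{W_1|Y_1^n}\le\ent{W_1|Y_1}=\ent{X_1|Y_1}=\loss{X\to Y}$, which is finite by hypothesis. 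Summing, $\frac1n\ent{W_1^n|Y_1^n}\le\frac1n\loss{X\to Y}+\frac{n-1}{n}\ent{W_2|X_1}$, and letting $n\to\infty$ yields the claim.

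The step I expect to demand the most care is the measure-theoretic identification of $(W_n,Y_n)$ with $X_n$: one must argue that $g_{W_n}^{-1}$ is well defined $P_X$-almost surely — i.e. that the set where $|g'|$ vanishes or where $x$ lies on a partition boundary is $P_X$-null, which is precisely what Definition~\ref{def:PBF} guarantees — and that the resulting equality of $\sigma$-algebras may be substituted inside conditional entropies; everything else is routine. It is worth remarking that, although the proposition is stated for Markovian $\Xvec$, the inequality chain above uses only stationarity; the Markov assumption is what singles out $\ent{W_2|X_1}$ as the natural ``one-step'' quantity and makes the bound genuinely sharper than $\entrate{W}$ of Proposition~\ref{prop:WBound} — indeed, since $W_1$ is a function of $X_1$ one has $\ent{W_2|X_1}\le\ent{W_2|W_1}$, and using the Markov property one even checks $\ent{W_2|X_1}=\limn\ent{W_{n+1}|X_n}\le\limn\ent{W_{n+1}|W_1^n}=\entrate{W}$. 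The finiteness of $\loss{X\to Y}$ is used only to make the boundary term vanish in the Cesàro limit.
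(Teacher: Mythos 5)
Your proof is correct and follows essentially the same route as the paper's: chain rule, conditioning reduces entropy, the $P_X$-a.s.\ equivalence of $X_n$ with the pair $(W_n,Y_n)$, and stationarity; you merely pass to $\Wvec$ at the outset (reusing the identity from the proof of Proposition~\ref{prop:WBound}) rather than converting $X_i$ to $W_i$ termwise at the end. Your explicit handling of the $i=1$ boundary term (which is where finiteness of $\loss{X\to Y}$ enters) and your observation that only stationarity, not Markovity, is actually needed for the inequality are both accurate refinements of the paper's argument.
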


\begin{IEEEproof}
 We again apply the chain rule, Markovity of $\Xvec$, and the fact that conditioning reduces entropy to arrive at
\begin{equation}
  \ent{X_1^n|Y_1^n} \le\sum_{i=1}^n \ent{X_i|X_{i-1},Y_i}.
\end{equation}
By stationarity we obtain
\begin{IEEEeqnarray}{RCL}
 \lossrate{X \to Y} &\le& \ent{X_2|X_{1},Y_2}\label{eq:Markovity}\\
&\stackrel{(a)}{=}& \ent{W_2|X_{1},Y_2}\\
&\le& \ent{W_2|X_{1}}\label{eq:prop52ndeq}
\end{IEEEeqnarray}
where $(a)$ holds since, for all $x\in\dom{X}$, $\ent{X_2|Y_2,X_1=x}=\ent{W_2|Y_2,X_1=x}$~\cite{Geiger_LossLong}. The last inequality is due conditioning~\cite[Thm.~2.6.5]{Cover_Information2} and completes the proof.
\end{IEEEproof}
That the bound is sharper than the one of Proposition~\ref{prop:WBound} follows from observing that
\begin{multline}
 \ent{W_n|X_{n-1}}=\limn\ent{W_n|X_1^{n-1}}\\\le\limn\ent{W_n|W_1^{n-1}}=\entrate{W}.
\end{multline}

The interpretation of this result is that a function destroys little information if the process is such that, given the current sample $X_{n-1}$, the next sample $X_n$ falls within some element of the partition with a high probability. \ifthenelse{\arxiv=1}{The question whether, and under which conditions, this bound is tight is related to the phenomenon of lumpability and will be answered in the following section.}{The question whether, and under which conditions,~\eqref{eq:Markovity} becomes an equality is related to the phenomenon of lumpability and will be answered in the following section. Conditions for tightness of this bound are given and illustrated with an example in~\cite{Geiger_LossRate_arXiv}.}

\section{Lumpability for Continuous-Valued Markov Processes}\label{sec:lumpability}
It is well-known that the function of a Markov process need not possess the Markov property itself. However, as it is known for Markov chains, there exist conditions on the function and/or the chain such that the output is Markov. In~\cite{Kemeny_FMC} this has been termed \emph{lumpability} and subsequently investigated by numerous researchers. While most results are given for finite Markov chains (e.g.,~\cite{Burke_MarkovFunction,Rogers_MarkovFunct}) relatively little is known in the general case of an uncountable alphabet (see~\cite{Rosenblat_Markov} for an exception). Our small contribution to this field of research lies in presenting sufficient conditions for lumpability of continuous-valued Markov processes. \ifthenelse{\arxiv=1}{}{The proofs can be found in~\cite{Geiger_LossRate_arXiv}.}

Let $f_{X_n|X_1^{n-1}}=f_{X_n|X_{n-1}}$ for all $n$, i.e., let $\Xvec$ be a Markov process. We maintain
\begin{prop}\label{prop:equalityThenLump}
 If 
\begin{multline}
 \forall y_1^2\in\dom{Y}^2: \forall x\in\preim{y_1}:\\
 f_{Y_2,X_1}(y_2,x)>0 \Rightarrow f_{Y_2|X_1}(y_2|x) = f_{Y_2|Y_1}(y_2|y_1)
\end{multline}
then $\Xvec$ is lumpable w.r.t. $g$, i.e., $\Yvec$ is Markov.
\end{prop}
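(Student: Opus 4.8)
The plan is to establish the Markov property of $\Yvec$ in the form $f_{Y_n\mid Y_1^{n-1}} = f_{Y_n\mid Y_{n-1}}$ for every $n\ge 2$; since $Y_n=g(X_n)$ and $\Xvec$ is stationary, this is exactly what ``$\Yvec$ is Markov'' means. Fix $y_1^{n-1}\in\dom{Y}^{n-1}$ of positive density and disintegrate over the value of $X_{n-1}$:
\begin{equation}
f_{Y_n\mid Y_1^{n-1}}(y_n\mid y_1^{n-1}) = \int f_{Y_n\mid X_{n-1},Y_1^{n-1}}(y_n\mid x,y_1^{n-1})\, f_{X_{n-1}\mid Y_1^{n-1}}(x\mid y_1^{n-1})\, dx .
\end{equation}
Because conditioning on $Y_1^{n-1}$ fixes in particular $Y_{n-1}=g(X_{n-1})=y_{n-1}$, the conditional law of $X_{n-1}$ entering this formula is concentrated on the at most countable fibre $\preim{y_{n-1}}$ (for a PBF each cell $\dom{X}_i$ meets the fibre in at most one point), so the ``integral'' is in fact a sum over that fibre, with weights obtained from the change-of-variables formula for $g$ and the conditional density of $X_{n-1}$.

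Next I would make two reductions. First, $Y_1^{n-1}$ is a deterministic function of $X_1^{n-1}$ and $\Xvec$ is Markov, hence $Y_n=g(X_n)$ is conditionally independent of $Y_1^{n-1}$ given $X_{n-1}$; therefore $f_{Y_n\mid X_{n-1},Y_1^{n-1}}(y_n\mid x,y_1^{n-1}) = f_{Y_n\mid X_{n-1}}(y_n\mid x)$, which by stationarity equals $f_{Y_2\mid X_1}(y_n\mid x)$. Second, every $x$ carrying positive weight lies in $\preim{y_{n-1}}$, and on that set (which is precisely where the integrand is nonzero, i.e.\ where $f_{Y_2,X_1}(y_n,x)>0$) the hypothesis gives $f_{Y_2\mid X_1}(y_n\mid x) = f_{Y_2\mid Y_1}(y_n\mid y_{n-1})$, a value that no longer depends on $x$. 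Substituting, the constant $f_{Y_2\mid Y_1}(y_n\mid y_{n-1})$ pulls out of the integral, the remaining integral of $f_{X_{n-1}\mid Y_1^{n-1}}(\cdot\mid y_1^{n-1})$ over $\preim{y_{n-1}}$ equals $1$, and we are left with $f_{Y_n\mid Y_1^{n-1}}(y_n\mid y_1^{n-1}) = f_{Y_2\mid Y_1}(y_n\mid y_{n-1}) = f_{Y_n\mid Y_{n-1}}(y_n\mid y_{n-1})$, the last equality again by stationarity.

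I expect the main obstacle to be the measure-theoretic bookkeeping rather than the idea. One must make the disintegration over $X_{n-1}$ precise when the conditional law of $X_{n-1}$ given $Y_1^{n-1}$ is a discrete measure on the countable fibre $\preim{y_{n-1}}$ (assembled from $f_X$, the conditional densities of $\Xvec$, and $|g'|$), and one must check that the ``$f_{Y_2,X_1}>0$'' caveat in the hypothesis coincides with the set of $x$ over which the integrand does not vanish, so that the replacement $f_{Y_2\mid X_1}(y_n\mid x)=f_{Y_2\mid Y_1}(y_n\mid y_{n-1})$ is valid $P_{X_{n-1}\mid Y_1^{n-1}}$-almost surely. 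The conditional-independence reduction $f_{Y_n\mid X_{n-1},Y_1^{n-1}}=f_{Y_n\mid X_{n-1}}$ also deserves an explicit one-line justification from the Markov property of $\Xvec$ and the coordinatewise action of $g$ (this is essentially the fact, already used in Proposition~\ref{prop:UBMarkov}, that $X_1$ and $Y_2$ together carry the same information about $X_2$ as $X_1,X_2$ do).
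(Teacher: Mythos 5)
Your argument is correct in outline but takes a genuinely different route from the paper. The paper never touches the conditional density of $Y_n$ given $Y_1^{n-1}$ directly: it first proves a sandwich lemma $\diffent{Y_n|Y_2^{n-1},X_1}\le\derate{Y}\le\diffent{Y_n|Y_1^{n-1}}$ (the lower bound being the nontrivial part, proved via the information-loss identity for the map $(Y_2^{n-1},X_k^1)\mapsto Y_k^{n-1}$), then observes that the hypothesis forces $\diffent{Y_2|Y_1}=\diffent{Y_2|X_1}$, so the sandwich collapses and $\diffent{Y_n|Y_1^{n-1}}=\diffent{Y_2|Y_1}$ for all $n$; Markovity then follows because $\mutinf{Y_n;Y_1^{n-2}|Y_{n-1}}=0$ is an expectation of a nonnegative Kullback--Leibler divergence, hence the conditional densities agree almost everywhere. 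Your disintegration over $X_{n-1}$ is more elementary and more informative: it produces the identity $f_{Y_n|Y_1^{n-1}}(\cdot|y_1^{n-1})=f_{Y_2|Y_1}(\cdot|y_{n-1})$ pointwise rather than only almost everywhere via an entropy argument, and it yields $\derate{Y}=\diffent{Y_2|Y_1}=\diffent{Y_2|X_1}$ as an immediate corollary instead of as an input. The price is the measure-theoretic care you already anticipate, and the conditional-independence step $f_{Y_n|X_{n-1},Y_1^{n-1}}=f_{Y_n|X_{n-1}}$, which is justified exactly as you say.

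The one point you flag but do not close is not pure bookkeeping and deserves an explicit argument: after substituting the hypothesis you are left with $f_{Y_2|Y_1}(y_n|y_{n-1})\sum_{x\in A}w(x)$, where $w(\cdot)$ is the conditional law of $X_{n-1}$ given $Y_1^{n-1}$ and $A=\{x\in\preim{y_{n-1}}: f_{Y_2|X_1}(y_n|x)>0\}$; you need $\sum_{x\in A}w(x)=1$, i.e., that no $x$ with $w(x)>0$ has $f_{Y_2|X_1}(y_n|x)=0$. This follows from the hypothesis itself via the convex-combination identity
\begin{equation}
f_{Y_2|Y_1}(y_n|y_{n-1})=\sum_{x\in\preim{y_{n-1}}}p_x\,f_{Y_2|X_1}(y_n|x),\qquad p_x\propto \frac{f_X(x)}{|g'(x)|}.
\end{equation}
Writing $c=f_{Y_2|Y_1}(y_n|y_{n-1})$, every term with $p_x>0$ and $f_{Y_2|X_1}(y_n|x)>0$ equals $c$ by hypothesis, so $c=c\sum_{x\in A}p_x$; if $c>0$ this forces $\sum_{x\in A}p_x=1$, hence $f_X(x)=0$ for every $x$ in the fibre outside $A$, and such $x$ cannot carry positive weight $w(x)$ either; if $c=0$ the hypothesis rules out any $x$ with $f_{Y_2,X_1}(y_n,x)>0$, so every term of your sum vanishes and both sides equal $0$. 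With this observation inserted, your proof is complete.
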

\ifthenelse{\arxiv=1}
{\begin{IEEEproof}
 See Appendix.
\end{IEEEproof}
As a corollary, we next make the conditions on the function $g$, the marginal distribution $f_X$, and the conditional distribution $f_{X_2|X_1}$ explicit. By adding a further condition, we gain tightness of Proposition~\ref{prop:UBMarkov} in addition to Markovity:
}
{A direct consequence of this proposition is $\derate{Y}=\diffent{Y_2|Y_1}=\diffent{Y_2|X_1}$. As a corollary, we make the conditions on the function $g$, the marginal distribution $f_X$, and the conditional distribution $f_{X_2|X_1}$ explicit.}
\begin{cor}\label{cor:sufficientConditions}
 If for all $y_1^2\in\dom{Y}^2$ and all $x,x'\in\preim{y_1}$ such that $f_{X}(x)>0$ and $f_{X}(x')>0$ the following holds
\begin{align}
 \sum_{x_2\in\preim{y_2}} \frac{f_{X_2|X_1}(x_2|x)}{|g'(x_2)|} &= \sum_{x_2\in\preim{y_2}} \frac{f_{X_2|X_1}(x_2|x')}{|g'(x_2)|} \label{eq:condEq}
\end{align}
then the condition of Proposition~\ref{prop:equalityThenLump} is fulfilled and $\Yvec$ is Markov. 

\ifthenelse{\arxiv=1}{If, additionally, for all $y\in\dom{Y}$, all $x$ within the support of $f_X$, and all $w,w'$ such that $\Prob{W_2=w|X_1=x}>0$ and $\Prob{W_2=w'|X_1=x}>0$
\begin{subequations}\label{eq:tightEq}
 \begin{equation}
 \frac{f_{X_2|X_1}(g_w^{-1}(y_2)|x)}{|g'(g_w^{-1}(y_2))|} = \frac{f_{X_2|X_1}(g_{w'}^{-1}(y_2)|x)}{|g'(g_{w'}^{-1}(y_2))|}\label{eq:tightEq:a}
\end{equation}
and
\begin{equation}
 \Prob{W_2=w'|X_1=x}=\Prob{W_2=w|X_1=x}\label{eq:tightEq:b}
\end{equation}
\end{subequations}
then the bound of Proposition~\ref{prop:UBMarkov} holds with equality.}{}
\end{cor}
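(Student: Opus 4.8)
The linchpin is the conditional form of the change-of-variables formula that already underlies \eqref{eq:lossdiffent}. Conditioned on $X_1=x$, the sample $X_2$ has density $f_{X_2|X_1}(\cdot|x)$, and since $Y_2=g(X_2)$ with $g$ a PBF, the branch-wise transformation theorem used in \cite{Geiger_LossLong} gives $f_{Y_2|X_1}(y_2|x)=\sum_{x_2\in\preim{y_2}}f_{X_2|X_1}(x_2|x)/|g'(x_2)|$. Read through this identity, \eqref{eq:condEq} says precisely that $x\mapsto f_{Y_2|X_1}(y_2|x)$ is constant over $x\in\preim{y_1}$ with $f_X(x)>0$, for every $y_1,y_2$. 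I would then identify this common value with $f_{Y_2|Y_1}(y_2|y_1)$: applying the same theorem to $Y_1=g(X_1)$ and to $(Y_1,Y_2)$ writes $f_{Y_2|Y_1}(y_2|y_1)$ as the convex combination $\sum_{x\in\preim{y_1}}\lambda_x f_{Y_2|X_1}(y_2|x)$ with weights $\lambda_x\propto f_X(x)/|g'(x)|$ summing to one, and a convex combination of equal terms equals that term. Thus the hypothesis of Proposition~\ref{prop:equalityThenLump} holds and $\Yvec$ is Markov; integrating the same identity $f_{Y_2|X_1}(\cdot|x)=f_{Y_2|Y_1}(\cdot|g(x))$ over $x$ also yields $\diffent{Y_2|X_1}=\diffent{Y_2|Y_1}=\derate{Y}$.

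For the tightness claim I would first note that, once $\Yvec$ is Markov, \eqref{eq:Markovity} is already an equality. Indeed, the conditional analogue of \eqref{eq:lossdiffent} gives $\ent{X_2|X_1,Y_2}=\diffent{X_2|X_1}-\diffent{Y_2|X_1}+\expec{\log|g'(X_2)|}$; inserting $\diffent{X_2|X_1}=\derate{X}$ (Markovity of $\Xvec$), $\diffent{Y_2|X_1}=\derate{Y}$ from above, stationarity of $\Xvec$, and Proposition~\ref{prop:lossratePBFs} collapses the right-hand side to $\derate{X}-\derate{Y}+\expec{\log|g'(X)|}=\lossrate{X\to Y}$. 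Combined with step $(a)$ of the proof of Proposition~\ref{prop:UBMarkov}, this gives $\lossrate{X\to Y}=\ent{W_2|X_1,Y_2}$, so the remaining gap $\ent{W_2|X_1,Y_2}\le\ent{W_2|X_1}$ is closed exactly when $W_2$ and $Y_2$ are conditionally independent given $X_1$. Here \eqref{eq:tightEq} enters: fixing $x$ in the support of $f_X$, the sub-density of $(W_2,Y_2)$ at $(w,y_2)$ given $X_1=x$ equals $f_{X_2|X_1}(g_w^{-1}(y_2)|x)/|g'(g_w^{-1}(y_2))|$, which \eqref{eq:tightEq:a} renders independent of the branch index $w$ (over branches of positive conditional probability), while \eqref{eq:tightEq:b} makes $\Prob{W_2=w|X_1=x}$ uniform over those same branches; dividing and using $\sum_w\Prob{W_2=w|X_1=x}=1$ and $\int f_{Y_2|X_1}(y_2|x)\,dy_2=1$ forces the sub-density to factor into the product of the two marginals, i.e.\ conditional independence, and averaging over $X_1$ finishes the argument.

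The branch-wise change of variables and the convex-combination step are standard given \cite{Geiger_LossLong}, so I expect the real work to be the bookkeeping in the last step: one has to verify that, for a.e.\ $y_2$ in the conditional support of $Y_2$, the set of branches $w$ with $g_w^{-1}(y_2)$ defined and $\Prob{W_2=w|X_1=x}>0$ is exactly the globally active set, so that the branch counts in the two normalisations cancel and the product structure genuinely emerges; this is where \eqref{eq:tightEq:a} is used in full strength. Beyond that, only mild measure-theoretic care is needed (the ``$P_X$-a.s.''\ qualifier on $|g'|$ and the null sets where $f_X$ or the conditional densities vanish), and the degenerate case $\loss{X\to Y}=\infty$ is ruled out by the hypothesis of Proposition~\ref{prop:UBMarkov}.
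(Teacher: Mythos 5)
Your proposal is correct and follows essentially the same route as the paper: the first part identifies \eqref{eq:condEq} with constancy of $f_{Y_2|X_1}(y_2|\cdot)$ on the positive-density preimage of $y_1$ and recognizes $f_{Y_2|Y_1}(y_2|y_1)$ as a convex combination of those equal values, and the second part reduces tightness to $\mutinf{W_2;Y_2|X_1}=0$ via $\lossrate{X\to Y}=\ent{X_2|X_1,Y_2}=\ent{W_2|X_1,Y_2}$ and then uses \eqref{eq:tightEq:a} and \eqref{eq:tightEq:b} to factor the conditional sub-density of $(W_2,Y_2)$ given $X_1$, exactly as the paper shows $f_{Y_2|X_1}=f_{Y_2|W_2,X_1}$. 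The bookkeeping caveat you flag about the active branch set is a real (minor) gap that the paper's own proof also passes over silently.
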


\ifthenelse{\arxiv=1}{
\begin{IEEEproof}
 See Appendix.
\end{IEEEproof}
In Section~\ref{sec:Examples} we show some examples for which the output process $\Yvec$ is Markov and for which the conditions in~\eqref{eq:tightEq} are fulfilled.}{In Section~\ref{ssec:ARProcess} we present an example for which the output process $\Yvec$ is Markov.}

\section{Relative Information Loss Rate for Functions which Reduce Dimensionality}\label{sec:RelativeLossRate}
Not all systems can be described by functions satisfying Definition~\ref{def:PBF}. In particular, a simple quantizer already violates this definition and suffers from infinite information loss. To analyze the information processing characteristics of a broader class of systems, in~\cite{Geiger_LossLong} the notion of relative information loss was introduced, capturing the \emph{percentage} of information available at the input lost in the system. To extend this notion to stochastic processes, we introduce
\begin{definition}[Relative Information Loss Rate]\label{def:RelLossRate}
 The relative information loss rate is
\begin{equation}
 \relLossrate{X\to Y} := \limn \relLoss{X_1^n\to Y_1^n} = \limn \lim_{\hat{\Xvec}\to\Xvec}\frac{\ent{\hat{X}_1^n|Y_1^n}}{\ent{\hat{X}_1^n}}
\end{equation}
whenever the limit exists.
\end{definition}
The limit $\hat{\Xvec}\to\Xvec$ is equivalent to $\lim_{k\to\infty} \lfloor 2^k \Xvec\rfloor/2^k$, where flooring and scalar multiplication are applied element-wise (cf.~\cite{Geiger_LossLong}).

\ifthenelse{\arxiv=1}{Based on
\begin{equation}
 \relLoss{X_1^n\to Y_1^n} \le \frac{1}{n}\sum_{i=1}^n \relLoss{X_i\to Y_i} \stackrel{(a)}{=} \relLoss{X\to Y}
\end{equation}
from~\cite{Geiger_LossLong} and from stationarity of $\Xvec$, which yields $(a)$, one can show\footnote{Note that also Watanabe and Abraham defined the \emph{fractional information loss} for stochastic processes on finite alphabets~\cite{Watanabe_InfoLoss}; for these types of processes, however, the relative information loss can be smaller \emph{or} larger than the information loss.} that $\relLossrate{X\to Y} \le \relLoss{X\to Y}$, complementing Proposition~\ref{prop:lossgelossrate}. 
}{Based on the upper bound for relative information loss in~\cite{Geiger_LossLong}, we can show that also $\relLossrate{X\to Y} \le \relLoss{X\to Y}$, complementing Proposition~\ref{prop:lossgelossrate}.} However, in many cases this inequality is an equality, as we show in
\begin{prop}[Redundancy won't help]\label{prop:RelLossgeRelLossRate}
  Let $\Xvec$ be a stationary stochastic process and $X$ an RV distributed according to the process's marginal distribution. Let further $g$ be defined on a finite partition $\{\dom{X}_i\}$ of $\dom{X}$ into non-empty sets as in~\eqref{eq:defg}, where $g_i\in\mathcal{C}^\infty$ is either injective or constant (i.e., $g_i(x)=c_i$ for all $x\in\dom{X}_i$). Then,
\begin{equation}
 \relLossrate{X\to Y} = \relLoss{X\to Y} = P_X(\dom{X}_c)
\end{equation}
where $\dom{X}_c$ is the union of all elements $\dom{X}_i$ of the partition on which $g$ is constant.
\end{prop}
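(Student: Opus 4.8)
The plan is to combine the already-established inequality $\relLossrate{X\to Y}\le\relLoss{X\to Y}$ with a matching lower bound, and to identify both sides with $P_X(\dom{X}_c)$. First I would recall from~\cite{Geiger_LossLong} that for a single RV and a function of the stated form, the relative information loss is exactly the probability mass lying on the ``constant'' pieces, so $\relLoss{X\to Y}=P_X(\dom{X}_c)$; this gives the upper bound $\relLossrate{X\to Y}\le P_X(\dom{X}_c)$ immediately via the inequality stated just before the proposition. The work is therefore entirely in the lower bound $\relLossrate{X\to Y}\ge P_X(\dom{X}_c)$.

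For the lower bound I would argue at the level of the quantized process. Fix $n$ and the dyadic quantization $\hat{\Xvec}$ at resolution $k$. The key observation is that on the product partition $\{\dom{X}_{i_1}\times\cdots\times\dom{X}_{i_n}\}$ of $\dom{X}^n$, the coordinate-wise map $g^n$ is constant in exactly those coordinates $j$ for which $i_j$ indexes a constant piece, and is injective (hence, after quantization, ``almost'' invertible up to a vanishing ambiguity) in the remaining coordinates. Consequently $\ent{\hat{X}_1^n\mid Y_1^n}$ is, up to lower-order terms in $k$, the conditional entropy of the quantized coordinates sitting on constant pieces given the cell-membership pattern $W_1^n$ — and each such quantized coordinate contributes $\approx k$ bits. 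Normalizing by $\ent{\hat{X}_1^n}\approx nk$ and letting $k\to\infty$, then dividing by $n$ and letting $n\to\infty$, the fraction of coordinates on constant pieces converges, by stationarity and the definition of $W_n$, to $\Prob{X\in\dom{X}_c}=P_X(\dom{X}_c)$. Thus $\relLossrate{X\to Y}\ge P_X(\dom{X}_c)$, and combined with the upper bound the proposition follows, with $\relLoss{X\to Y}=P_X(\dom{X}_c)$ recorded along the way.

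An alternative, cleaner route for the lower bound is to invoke a data-processing/monotonicity property of relative information loss from~\cite{Geiger_LossLong}: discarding the injective coordinates of $g^n$ cannot increase the retained information fraction, so $\relLoss{X_1^n\to Y_1^n}$ is at least the relative information loss of the map that simply deletes every coordinate lying on a constant piece. For that pure erasure map the relative information loss of the block is exactly $(\#\{j: X_j\in\dom{X}_c\})/n$ in expectation in the appropriate limiting sense, and by the law of large numbers / ergodic-type averaging built into stationarity its limit as $n\to\infty$ is $P_X(\dom{X}_c)$. I would use whichever of these two formulations matches the lemmas actually available in~\cite{Geiger_LossLong}.

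The main obstacle is handling the injective-but-not-bijective pieces cleanly under quantization: one must show that the residual ambiguity they introduce into $\ent{\hat{X}_1^n\mid Y_1^n}$ is $o(k)$ per coordinate (equivalently, that $g_i\in\mathcal{C}^\infty$ injective behaves, in the $k\to\infty$ limit, like a PBF and contributes zero relative loss), so that it does not leak into the normalized limit. This is precisely the single-letter fact from~\cite{Geiger_LossLong} that an injective smooth map has zero relative information loss; the only new content here is checking that it survives the two nested limits in Definition~\ref{def:RelLossRate}, which follows from the finiteness of the partition (so the number of ``bad'' coordinates is controlled) together with stationarity.
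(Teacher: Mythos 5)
Your overall strategy matches the paper's: the single-letter identity $\relLoss{X\to Y}=P_X(\dom{X}_c)$ plus the inequality $\relLossrate{X\to Y}\le\relLoss{X\to Y}$ give the upper bound, and the block-level argument reduces to counting how many of the $n$ coordinates of $g^n$ land on constant pieces, each contributing full quantization entropy while the injective pieces contribute only an $O(\log K)$ ambiguity that washes out after normalizing by $\ent{\hat{X}_1^n}$. The paper formalizes exactly this via the R\'{e}nyi information dimension: $g^n$ restricted to each cell of the product partition is a bijection composed with a projection, so $\relLoss{X_1^n\to Y_1^n}=\frac{1}{n}\expec{Z_n}$ where $Z_n=\sum_{j=1}^n V_j$ counts the coordinates in $\dom{X}_c$.

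There is, however, one step in your write-up that would not survive scrutiny as stated. You claim that ``the fraction of coordinates on constant pieces converges, by stationarity \ldots to $P_X(\dom{X}_c)$'' and, in the alternative route, you invoke ``the law of large numbers / ergodic-type averaging built into stationarity.'' The proposition assumes only stationarity, not ergodicity, so the empirical fraction $Z_n/n$ need not converge to the constant $P_X(\dom{X}_c)$ at all; by the ergodic theorem it converges to a (possibly nondegenerate) random variable whose \emph{mean} is $P_X(\dom{X}_c)$. The argument is saved — and this is precisely the point of the paper's proof and of the proposition's title — because the quantity entering the relative loss is the \emph{expected} number of constant coordinates, and
\begin{equation}
\frac{1}{n}\expec{Z_n}=\frac{1}{n}\sum_{j=1}^n\expec{V_j}=\expec{V}=P_X(\dom{X}_c)
\end{equation}
holds exactly for every finite $n$ by linearity of expectation and stationarity, with no limit, no independence, and no ergodicity required. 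Replace the LLN appeal with this one-line computation and your proof is complete; as a bonus you then get the stronger statement that $\relLoss{X_1^n\to Y_1^n}=P_X(\dom{X}_c)$ for each $n$, not merely in the limit, which is what makes redundancy provably useless here.
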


\begin{IEEEproof}
\ifthenelse{\arxiv=1}{See Appendix.}{See~\cite{Geiger_LossRate_arXiv}.}
\end{IEEEproof}

Indeed, we conjecture that equality is indeed the ``usual'' case, prevailing in most practical scenarios. Thus, while redundancy can help reduce information loss, it may be useless when it comes to relative information loss. Applications of this result may be the scalar quantization of a stochastic process (leading to a relative information loss rate of 1, i.e., 100\% of the information is lost~\cite{Geiger_LossLong}) and system blocks for multirate signal processing (see the example in Section~\ref{ssec:multirate}).

\section{Examples}\label{sec:Examples}
\subsection{AR-Process and Magnitude Function}\label{ssec:ARProcess}
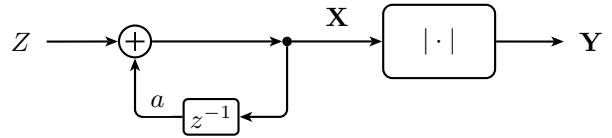
\begin{figure}[t]
 \centering
\begin{pspicture}[showgrid=false](1.5,1)(9,2.5)
	\psset{style=Arrow,style=RoundCorners}
 	\pssignal(1.5,2){z}{$Z$}
	\pscircleop(3,2){oplus}
	\dotnode(5,2){dot}
	\psfblock[framesize=1.5 1](7,2){c}{$|\cdot|$}
	\psfblock[framesize=0.75 0.5](4,1){del}{$z^{-1}$}
	\ncangle[angleA=-90,angleB=0]{dot}{del}
	\ncangle[angleA=180,angleB=-90]{del}{oplus}
	\nbput*{$a$}
	\pssignal(9,2){y}{$\Yvec$}
	\nclist[style=Arrow]{ncline}[naput]{z,oplus,dot,c $\Xvec$ ,y}
\end{pspicture}
 \caption{AR(1)-process with magnitude function. The input $Z$ is a sequence of iid Gaussian RVs with zero mean and variance $\sigma^2$; thus, the process $\Xvec$ is Gaussian with zero mean and variance $\sigma^2/1-a^2$. The process generator filter is a first-order all-pole filter with a single pole at $a$.}
 \label{fig:ARSysModel}
\end{figure}

\newcommand{\normpdf}[1]{\phi(#1)}

In this example we assume that a first-order, zero-mean, Gaussian auto-regressive process $\Xvec$ is fed through a magnitude function (see Fig.~\ref{fig:ARSysModel}). Let the AR process be generated by the following difference equation:
\begin{equation}
 X_n=aX_{n-1}+Z_n
\end{equation}
where $a\in (0,1)$ and where $Z_n$ are samples drawn independently from a Gaussian distribution with zero mean and variance $\sigma^2$. It follows immediately that the process $\Xvec$ is also zero mean and has variance $\sigma_X^2=\frac{\sigma^2}{1-a^2}$~\cite[Ex.~6.11]{Oppenheim_Discrete3}. Let $\Yvec$ be defined by $Y_n=|X_n|$.

For the sake of brevity we define $\normpdf{\mu,\sigma^2;x}$ as the PDF of a Gaussian RV with mean $\mu$ and variance $\sigma^2$, evaluated at $x$. Thus, we get
\begin{equation}
 f_X(x) = \normpdf{0,\sigma_X^2;x}
\end{equation}
and
\begin{equation}
 f_{X_2|X_1}(x_2|x_1) = \normpdf{ax_1,\sigma^2;x_2}.
\end{equation}
It follows that~\eqref{eq:condEq} is satisfied with $|g'(x)|\equiv 1$ and since $\normpdf{ax_1,\sigma^2;x_2}=\normpdf{-ax_1,\sigma^2;-x_2}$,
\begin{multline}
 \sum_{x_2\in\preim{y_2}} \frac{f_{X_2|X_1}(x_2|y_1)}{|g'(x_2)|}\\
=\normpdf{ay_1,\sigma^2;y_2}+\normpdf{ay_1,\sigma^2;{-}y_2}\\
=\normpdf{{-}ay_1,\sigma^2;{-}y_2}+\normpdf{{-}ay_1,\sigma^2;y_2}\\
=\sum_{x_2\in\preim{y_2}} \frac{f_{X_2|X_1}(x_2|{-}y_1)}{|g'(x_2)|}.
\end{multline}
As a consequence, the output process $\Yvec$ is Markov.

We performed a series of simulations, as the information loss rate for this example cannot be expressed in closed form. Rewriting, e.g., the lower bound on the information loss rate as \ifthenelse{\arxiv=1}{}{(see~\cite{Geiger_LossRate_arXiv} for an explanation)}
\begin{align}
& \lossrate{X\to Y}\notag\\ &\ge \diffent{X_2|X_1}-\diffent{Y_2|X_1}+\expec{\log|g'(X)|}\\
&= \diffent{X}-\mutinf{X_1;X_2}-\diffent{Y}+\mutinf{X_1;Y_2}\notag\\&\quad{}+\expec{\log|g'(X)|}\\
&= \loss{X\to Y}-\mutinf{X_1;X_2}+\mutinf{X_1;Y_2}
\end{align}
allowed us to employ the histogram-based mutual information estimation from~\cite{Moddemeijer_Matlab} together with $\loss{X\to Y}=1$, as shown in~\cite{Geiger_LossLong}. The upper bound $H(W_2|X_1)$ from Proposition~\ref{prop:UBMarkov} was computed using numerical integration. In Fig.~\ref{fig:ARInfoLoss} one can see that the first-order upper and lower bounds on the information loss rate \ifthenelse{\arxiv=1}{from Lemma~\ref{lem:boundsDERate} in the proof of Proposition~\ref{prop:UBMarkov}}{(see the proof of Proposition~\ref{prop:UBMarkov} in~\cite{Geiger_LossRate_arXiv})} are indistinguishable, which suggests that the output process is indeed Markov. Moreover, it can be seen that a higher value for the magnitude $a$ of the pole leads to a smaller information loss rate. This can be explained by the fact that the redundancy\footnote{The redundancy is defined as the difference between the entropy of the marginal distribution and the entropy rate of the process. The former increases due to increasing variance $\sigma^2_X$, while the latter remains constant and equal to $\diffent{Z}$ (cf.~\cite{Dumitrescu_EntropyInvariance}).} of the process $\Xvec$ increases with increasing $a$, which helps preventing information loss.

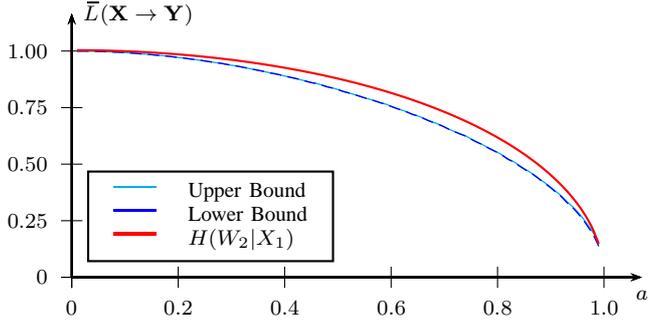
\begin{figure}[t]
 \centering
\begin{pspicture}[showgrid=false](-1,-0.5)(8,3.6)
	\footnotesize
	\psaxes[Dx=0.2,dx=1.4,Dy=0.25,dy=0.75]{->}(0,0)(7.5,3.5)[$a$,-90][$\lossrate{X\to Y}$,0]
	\rput[lb](0.2,0.2){\psframebox%
	{\begin{tabular}{ll}
	  	\cyan\rule[1ex]{2em}{0.5pt} & Upper Bound\\%
		\blue\rule[1ex]{2em}{0.5pt} & Lower Bound\\%
		\red\rule[1ex]{2em}{1pt} & $\ent{W_2|X_1}$%
	 \end{tabular}}}
	\readdata{\UB}{AR1UB.dat}
	\readdata{\LB}{AR1LB.dat}
	\readdata{\HWX}{AR1HWX.dat}
	\psset{xunit=7cm,yunit=3cm}
	 	\dataplot[plotstyle=curve,linecolor=cyan,linewidth=0.5pt]{\UB}
		\dataplot[plotstyle=curve,linecolor=blue,style=Dash,linewidth=0.5pt]{\LB}
		\dataplot[plotstyle=curve,linecolor=red]{\HWX}
\end{pspicture}
 \caption{Information Loss Rate of an AR(1)-process $\Xvec$ in a magnitude function as a function of the pole $a$ of the process generator difference equation. A larger pole, leading to a higher redundancy of $\Xvec$, reduces the information loss rate.}
 \label{fig:ARInfoLoss}
\end{figure}

Generally, while redundancy reduces the information loss rate compared to an iid process (cf. Proposition~\ref{prop:lossgelossrate}), it is not necessarily true that \emph{more} redundancy leads to a smaller information loss rate than \emph{less} redundancy. Indeed, one can generate examples where a process with a higher redundancy suffers from a higher information loss rate than a process with less redundancy. This suggests that the redundancy of a process has to be matched to the function $g$ in order to efficiently prevent information from being lost; in that sense, this parallels the field of channel coding, where the code needs to be matched to the characteristics of the channel (noise, fading, burst errors) in order to successfully reduce the bit error rate.

\ifthenelse{\arxiv=0}{}{
\subsection{Cyclic Random Walk}\label{ssec:randomwalk}
We next consider a scenario where our process is a cyclic random walk on a subset $[-M,M]$ of the real line. Assume that for a given state $X_1$ the following state is uniformly distributed on a cyclically shifted subset of $[-M,M]$ of length $2a\le 2M$, i.e.,
\begin{equation}
 f_M(x_2|x_1):=f_{X_2|X_1}(x_2|x_1) = \begin{cases}
                         \frac{1}{2a}, & \text{if } d(x_2,x_1)\le a\\0,&\text{else}
                        \end{cases}
\end{equation}
where $d(x,y)=\min_k |x-y-2kM|$. Intuitively, $X_n$ is the sum of $n$ independent RVs uniformly distributed on $[-a,a]$, where sums outside of $[-M,M]$ are mapped back into this interval via the modulo operation. It is easy to verify that the marginal distribution of $\Xvec$ is the uniform distribution\footnote{The discrete-valued equivalent is a Markov chain with a doubly stochastic transition matrix, for which it is known that the stationary distribution is the uniform distribution~\cite[p.~732]{Papoulis_Probability}.}, i.e., $f_X(x)=\frac{1}{2M}$ for all $x\in [-M,M]$ and zero otherwise. The function we feed the process through shall again be the magnitude function, i.e., $Y_n=|X_n|$.

Since $d(x,y)=d(-x,-y)$ and since $|g'(x)|\equiv 1$ for all $x$, it follows that~\eqref{eq:condEq} is fulfilled, and that thus $\Yvec$ is Markov. Moreover, we have $\derate{Y}=\diffent{Y_2|X_1}$, and obtain for the information loss rate with Proposition~\ref{prop:lossratePBFs}
\begin{align}
 &\lossrate{X\to Y}\notag\\
&= \derate{X}-\derate{Y}+\expec{\log|g'(X)|}\\
&= \diffent{X_2|X_1}-\diffent{Y_2|X_1}\\
&= \int_{-M}^M\int_{-M}^M f_X(x_1)  f_{M}(x_2|x_1) \log \frac{f_{Y_2|X_1}(|x_2||x_1)}{f_{M}(x_2|x_1)}dx_2dx_1\\
&=\int_{-M}^M\int_{-M}^M \frac{f_{M}(x_2|x_1)}{2M} \log \left(1+\frac{f_{M}(-x_2|x_1)}{f_{M}(x_2|x_1)}\right)dx_2dx_1.
\end{align}
The logarithm evaluates to zero if $f_{M}(-x_2|x_1)=0$ and to one otherwise (the logarithm is taken to base 2). Therefore, we can write
\begin{align}
 &\lossrate{X\to Y}\notag\\
&=\frac{4a}{2M}\int_0^M\int_{-M}^M f_{M}(x_2|x_1)f_{M}(-x_2|x_1)dx_2dx_1
\end{align}
where we exploited the symmetry of $f_M$. It can be shown that the integral evaluates to $\frac{1}{2}$, so the information loss rate is $\lossrate{X\to Y}=\frac{a}{M}$. 

This result has a nice geometric interpretation: It quantifies the expected overlap of two segments of length $2a$ randomly placed on a circle with circumference $2M$; due to the modulo operation the point $-M$ is equivalent to the point $M$, and the conditional PDFs $f_M(x_2|x_1)$ and $f_M(-x_2|x_1)$ represent the segments (see Fig.~\ref{fig:CycInterpretation}).

Finally, we evaluated the upper bound from Proposition~\ref{prop:UBMarkov}: Letting $\dom{X}_1=[-M,0)$ and $\dom{X}_2=[0,M]$ and abbreviating $p(1|x):=\Prob{W_2=1|X_1=x}$ we obtained
\begin{equation}
 p(1|x)=
\begin{cases}
	\frac{a-M-x}{2a},& -M\le x<-M+a\\
	0,& -M+a\le x<-a\\
	\frac{x+a}{2a},& -a\le x<a\\
	1,& a\le x <M-a\\
	\frac{a+M-x}{2a},& M-a\le x<M
\end{cases}
\end{equation}
if $M>2a$ and
\begin{equation}
 p(1|x)=
\begin{cases}
	\frac{a-M-x}{2a},& -M\le x<-a\\
	\frac{2a-M}{2a},& -a\le x<-M+a\\
	\frac{x+a}{2a},& -M+a\le x<M-a\\
	\frac{M}{2a},& M-a\le x <a\\
	\frac{a+M-x}{2a},& a\le x<M
\end{cases}
\end{equation}
if $M\le2a$. (Naturally, $p(2|x)=1-p(1|x)$.) Computing the entropy $\ent{W_2|X_1=x}$ based on these probabilities and taking the expectation w.r.t. $X_1$ yields
\begin{equation}
 \ent{W_2|X_1}=
\begin{cases}
 \frac{a}{M\ln2},&M>2a\\
 \frac{M-a}{M\ln2}+\log\frac{2a}{M},&M\le 2a
\end{cases}.
\end{equation}
The analytic result for the information loss rate and the bound, numerically validated using the same procedure as in Section~\ref{ssec:ARProcess}, are depicted in Fig.~\ref{fig:CycLoss}.

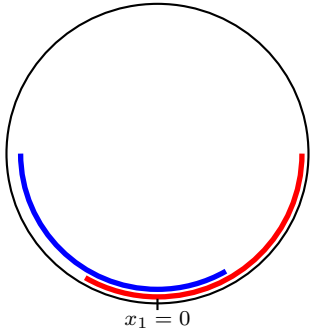
\begin{figure}[t]
 \centering
\begin{pspicture}[showgrid=false](-4,-2)(4,2)
	\footnotesize
	\pscircle(0,0){2}
	\psarc[linecolor=red,linewidth=2pt](0,0){1.9}{-120}{0}
	\psarc[linecolor=blue,linewidth=2pt](0,0){1.8}{-180}{-60}
	\psTick{90}(0,-2)
	\rput(0,-2.2){$x_1=0$}
\end{pspicture}
 \caption{Interpreting the information loss rate of a cyclic random walk in a magnitude function. The depicted scenario corresponds to $a=M/3$.}
 \label{fig:CycInterpretation}
\end{figure}

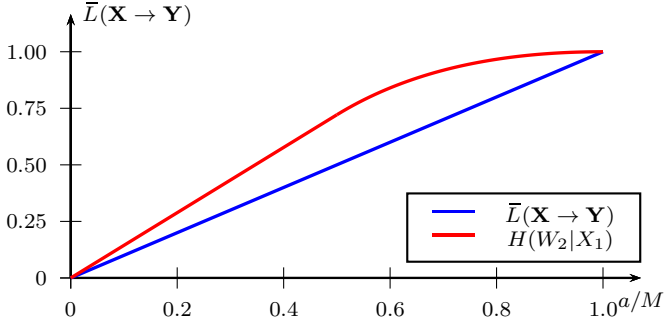
\begin{figure}[t]
 \centering
\begin{pspicture}[showgrid=false](-1,-0.5)(8,3.6)
	\footnotesize
	\psaxes[Dx=0.2,dx=1.4,Dy=0.25,dy=0.75]{->}(0,0)(7.5,3.5)[$a/M$,-90][$\lossrate{X\to Y}$,0]
	\rput[rb](7.5,0.2){\psframebox%
	{\begin{tabular}{ll}
	  	\blue\rule[1ex]{2em}{0.5pt} & $\lossrate{X\to Y}$\\%
		\red\rule[1ex]{2em}{1pt} & $\ent{W_2|X_1}$%
	 \end{tabular}}}
	\psplot[style=Graph,linecolor=blue]{0}{7}{x 7 div 3 mul}
	\psplot[style=Graph,linecolor=red]{0}{3.5}{x 7 div 3 mul 2 ln div}
	\psplot[style=Graph,linecolor=red]{3.5}{7}{3 2 ln div x 7 div 3 mul 2 ln div sub x 7 div 2 mul ln 2 ln div 3 mul add}
\end{pspicture}
 \caption{Information loss rate of a cyclic random walk $\Xvec$ on $[-M,M]$ in a magnitude function as a function of the support $[-a,a]$ of the uniform input PDF.}
 \label{fig:CycLoss}
\end{figure}

\subsection{An Example illustrating the Tightness of the Bounds}\label{ssec:exampleTight}
The following example illustrates the tightness of the presented bounds and also satisfies the condition of lumpability. Assume that $\Xvec$ is a Markov process with conditional distribution
\begin{equation}
 f_M(x_2|x_1) = \frac{1}{2}
\begin{cases}
   \mathbb{I}_{[1,2)}(x_2)+\mathbb{I}_{[3,4)}(x_2), & x_1\in[0,1)\cup[2,3)\\
   \mathbb{I}_{[0,1)}(x_2)+\mathbb{I}_{[2,3)}(x_2), & x_1\in[1,2)\cup[3,4)
\end{cases}
\end{equation}
where $\mathbb{I}_A(x)=1$ iff $x\in A$. As it can be shown easily, the stationary distribution is the uniform distribution on $[0,4)$, thus, $\diffent{X}=\log 4=2$ and $\derate{X}=1$.

We analyze the system mapping the interval $\dom{X}_2=[2,4)$ onto the interval $\dom{X}_1=[0,1)$, i.e., 
\begin{equation}
 g(x)=\begin{cases}
       x,& x\in[0,2)\\ x-2,& x\in[2,4)
      \end{cases}
\end{equation}
which yields the conditional distribution of $Y_2$ given $X_1$
\begin{equation}
 f_{Y_2|X_1}(y_2|x_1) = 
\begin{cases}
   \mathbb{I}_{[1,2)}(y_2), & x_1\in[0,1)\cup[2,3)\\
   \mathbb{I}_{[0,1)}(y_2), & x_1\in[1,2)\cup[3,4)
\end{cases}.
\end{equation}
The derivative of this function is identical to one, the stationary distribution of the output process $\Yvec$ is the uniform distribution on $[0,2)$; thus, $\diffent{Y}=\log 2=1$ and $\loss{X\to Y}=1$.

The output process $\Yvec$ can be shown to be Markov: Assuming $X_1=x\in[0,1)$, it follows that $x'\in[2,3)$; since these conditions are equivalent in the definition of $f_M$,~\eqref{eq:condEq} is fulfilled.

From $f_M$ one can see that $\Prob{W_2=1|X_1=x}=\Prob{W_2=2|X_1=x}=\frac{1}{2}$ regardless of $x$, which satisfies~\eqref{eq:tightEq:b} and renders the upper bound from Proposition~\ref{prop:UBMarkov} as
\begin{equation}
 \ent{W_2|X_1}=1.
\end{equation}
The bound can be shown to be tight, since also~\eqref{eq:tightEq:a} is fulfilled: Given, e.g., $X_1=x\in[0,1)$ and $Y_2=y\in[1,2)$, it follows that $X_2\in\{y,y+2\}$ and $f_M(y|x)=f_M(y+2|x)=\frac{1}{2}$. Thus, we are led to the following conclusion:
\begin{equation}
 1=\loss{X\to Y}\ge \lossrate{X\to Y}=\ent{W_2|X_1}=1
\end{equation}
This is an example not only for tightness of Proposition~\ref{prop:UBMarkov} but also of Proposition~\ref{prop:lossgelossrate}. Interestingly, neither is the function information-preserving, nor is the input process $\Xvec$ iid. Consequently, one can interpret this example as a worst-case, where redundancy is not matched to the system (the ``channel''), failing to alleviate the adverse effects of the system.
}

\subsection{Multirate Systems}\label{ssec:multirate}
Although strictly speaking not time-invariant, also multirate systems can be analyzed with the proposed relative information loss rates. In particular, we will show that for an $M$-fold downsampler, which is described by the input-output relation $Y_n = X_{nM}$, the information loss rate equals
\begin{equation}
 \relLossrate{X\to Y} = \frac{M-1}{M}.
\end{equation}
To this end, note that the stationary output process $\Yvec$ is equivalent to the cyclo-stationary process $\tilde{\Yvec}$, whose samples are defined as
\begin{equation}
 \tilde{Y}_n=
\begin{cases}
	X_n, & \text{if } n/M \in\mathbb{Z}\\
	0,& \text{else }
\end{cases}.\label{eq:downsampler}
\end{equation}
In essence, the function in~\eqref{eq:downsampler} implements a projection on a subspace of lower dimensionality. For these type we showed in~\cite{Geiger_LossLong} that the relative information loss is related to the information dimension of the output, which in our case is given by the number of its non-zero entries, i.e., by
\begin{equation}
 \infodim{\tilde{Y}_1^n} = \left\lfloor\frac{n}{M}\right\rfloor.
\end{equation}
With $\infodim{X_1^n}=n$ and by the fact that $\lfloor n/M\rfloor = n/M+\{n/M\}$, where $\{\cdot\}$ denotes the fractional part, we obtain
\begin{multline}
 \limn \relLoss{X_1^n\to \tilde{Y}_1^n} = 1-\limn\frac{n/M+\{n/M\}}{n}\\
= 1-\frac{1}{M} = \frac{M-1}{M}.
\end{multline}
The second equality follows because the magnitude of the fractional part is bounded by unity and that, thus, this term vanishes in the limit.

\section{Conclusion}
In this work we extended previous results about the information loss induced by deterministic, memoryless input-output systems from random variables to stationary stochastic processes with continuous distribution. Notably, we showed a connection between the rate of information loss and the differential entropy rates of the input and output processes for a special class of functions. While redundancy decreases the information loss rate for this class of systems, systems which destroy an infinite amount of information do not benefit from redundancy of the process in most practical cases. Future investigations shall focus on the extension to systems with memory and on the problem of reconstructing the input process.

As side results, we presented sufficient conditions for the Markovian input process and the system function such that the output process is Markov.

\ifthenelse{\arxiv=1}{
\appendices
\section{Proofs}
\subsection{Proof of Proposition~\ref{prop:equalityThenLump}}
Note that a possible definition of Markovity is given by
\begin{definition}[Markov Process{~\cite[II.6, p.~80]{Doob_StochasticProcesses}}]\label{def:Markov}
 A process $\Xvec$ is a Markov process iff for all $i\in\mathbb{N}$, $a\in\mathbb{R}$, and integers $n_1<n_2<\dots<n_i<n$, with probability one,
\begin{multline}
\Prob{X_n\le a|X_{n_1}=x_{n_1},\dots,X_{n_i}=x_{n_i}}\\
=\Prob{X_n\le a|X_{n_i}=x_{n_i}}.\label{eq:markovdef}
\end{multline}
\end{definition}

Clearly, a process is Markov if, for all $n$,
\begin{equation}
 f_{X_n|X_1^{n-1}}\stackrel{a.e.}{=}f_{X_n|X_{n-1}}\label{eq:equalPDFs}
\end{equation}
holds $P_{X_1^{n-1}}$-a.s. because~\eqref{eq:markovdef} results from integrating the densities over $(-\infty,a]$.

The proof of the proposition follows along the same lines as the proof for Markov chains given in~\cite{GeigerTemmel_kLump}, and is built on the following Lemma, which is an extension of~\cite[Thm.~4.5.1]{Cover_Information2}:

\begin{lem}[Bounds on the differential entropy rate]\label{lem:boundsDERate}
Let $\Xvec$ be a stationary Markov process with differential entropy rate $\derate{X}=\diffent{X_2|X_1}$ and let $\Yvec$ be a stationary process derived from $\Xvec$ by $Y_n=g(X_n)$. Then,
\begin{equation}
 \diffent{Y_n|Y_2^{n-1},X_1} \le \derate{Y} \le \diffent{Y_n|Y_1^{n-1}}.
\end{equation}
\end{lem}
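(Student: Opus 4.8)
The statement to prove is the two-sided bound
\[
 \diffent{Y_n|Y_2^{n-1},X_1} \le \derate{Y} \le \diffent{Y_n|Y_1^{n-1}}
\]
for a stationary process $\Yvec$ with $Y_n=g(X_n)$ derived from a stationary Markov process $\Xvec$. This is the continuous-alphabet analogue of~\cite[Thm.~4.5.1]{Cover_Information2}, and I would mimic that proof, replacing Shannon entropies with differential entropies (which is legitimate since we assumed all joint differential entropies exist and are finite, and since the entropy-rate limit~\eqref{eq:derate} exists for $\Yvec$ by stationarity). The right-hand inequality is the easy, standard half; the left-hand inequality, which uses the Markovity of $\Xvec$ in an essential way, is the part that needs care and is the main obstacle.

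\emph{Upper bound $\derate{Y}\le\diffent{Y_n|Y_1^{n-1}}$.} First I would recall that $\derate{Y}=\limk \diffent{Y_k|Y_1^{k-1}}$, and that the sequence $\diffent{Y_k|Y_1^{k-1}}$ is non-increasing in $k$: conditioning reduces differential entropy, so $\diffent{Y_{k+1}|Y_1^{k}}\le\diffent{Y_{k+1}|Y_2^{k}}=\diffent{Y_{k}|Y_1^{k-1}}$, the last equality by stationarity. A non-increasing sequence converges to its infimum, hence $\derate{Y}\le\diffent{Y_n|Y_1^{n-1}}$ for every $n$. This half uses nothing beyond stationarity and "conditioning reduces entropy."

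\emph{Lower bound $\diffent{Y_n|Y_2^{n-1},X_1}\le\derate{Y}$.} Here is where Markovity enters. The key observation is that, because $\Xvec$ is Markov and $Y_m=g(X_m)$ is a function of $X_m$ alone, for any $k\ge 1$ the variable $X_1$ is conditionally redundant once we condition on a later block: more precisely, for $m\ge k$,
\[
 \diffent{Y_{m+k}|Y_{k}^{m+k-1},X_1} \;=\; \diffent{Y_{m+k}|Y_{k+1}^{m+k-1},X_1}
\]
is \emph{not} what I want directly; rather the clean route is to fix $n$, let $N$ be large, and bound $\diffent{Y_1^N|X_1}$ two ways. On one hand, by the chain rule and stationarity-of-conditionals plus the fact that $\diffent{Y_j|Y_2^{j-1},X_1}$ is non-increasing in $j$ (again "conditioning reduces entropy", shifting indices using the Markov property of $\Xvec$ to re-anchor the conditioning on $X_1$), one gets $\diffent{Y_1^N|X_1}\le \diffent{Y_1|X_1} + (N-1)\,\diffent{Y_n|Y_2^{n-1},X_1}$ for $N\ge n$, so $\tfrac1N\diffent{Y_1^N|X_1}\ge$—wait, the inequality goes the wrong way; instead I would use monotone convergence of $\diffent{Y_j|Y_2^{j-1},X_1}$ to its limit $L$ and show $\tfrac1N\diffent{Y_1^N|X_1}\to L$. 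On the other hand, $\diffent{Y_1^N|X_1}\le\diffent{Y_1^N}$ so $\tfrac1N\diffent{Y_1^N|X_1}\le \tfrac1N\diffent{Y_1^N}\to\derate{Y}$. Combining, $L\le\derate{Y}$, and since $\diffent{Y_n|Y_2^{n-1},X_1}\ge L$ is false—it is $\le$—one has $\diffent{Y_n|Y_2^{n-1},X_1}\ge L$? I need $\diffent{Y_n|Y_2^{n-1},X_1}$ to be the \emph{n-th term} of a non-increasing sequence with limit $L$, hence $\ge L$; then $\diffent{Y_n|Y_2^{n-1},X_1}\ge L$ contradicts the desired bound, so in fact the correct monotone quantity is $\diffent{Y_n|Y_1^{n-1},X_1}$ increasing in... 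The cleanest correct formulation: show the sequence $\diffent{Y_n|Y_2^{n-1},X_1}$ is non-decreasing in $n$ (using that dropping the oldest $Y$ while keeping $X_1$ and invoking Markovity of $\Xvec$ does not increase information), it is bounded above by $\derate{Y}$ via the Cesàro/AEP-style argument $\tfrac1N\diffent{Y_1^N|X_1}\to\sup_n\diffent{Y_n|Y_2^{n-1},X_1}\le\limsup\tfrac1N\diffent{Y_1^N}=\derate{Y}$, and therefore every term $\diffent{Y_n|Y_2^{n-1},X_1}\le\derate{Y}$.

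\emph{Main obstacle.} The delicate point—and the one I would spend the most care on—is justifying that conditioning on $X_1$ in $\diffent{Y_n|Y_2^{n-1},X_1}$ behaves like conditioning on "the infinite past" of $\Yvec$, which is exactly where the Markov property of $\Xvec$ is used: since $X_1$ d-separates $X_2,\dots$ from $X_0,X_{-1},\dots$, and each $Y_m$ is $\sigma(X_m)$-measurable, conditioning on $X_1$ dominates conditioning on any $Y_1^{0},Y_{-1},\ldots$ that could be appended. I would make this precise by the Markov-chain relation $Y_2^{n-1} - X_1 - (\text{anything at times}\le 1)$ and the corresponding differential-entropy inequality, then push the Cesàro argument through. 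The remaining steps (chain rule, stationarity of conditional densities, "conditioning reduces differential entropy", and the Cesàro lemma for non-negative—here, possibly signed but bounded—sequences) are routine given the paper's standing finiteness assumptions, so I will not grind through them.
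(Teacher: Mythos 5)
Your two halves are of very different quality. The upper bound is correct and is essentially the paper's argument (monotonicity of $\diffent{Y_k|Y_1^{k-1}}$ in $k$ plus convergence to $\derate{Y}$). For the lower bound you oscillate between two strategies and fully execute neither. The strategy you finally declare ``cleanest'' --- show that $\diffent{Y_n|Y_2^{n-1},X_1}$ is non-decreasing in $n$ and cap its supremum by $\derate{Y}$ via a Ces\`aro argument --- can be made to work, but as written it contains a concrete error: $\diffent{Y_1^N|X_1}=-\infty$, since $Y_1=g(X_1)$ has no conditional density given $X_1$, so the quantity you propose to average is not finite for any $N$. You must anchor the Ces\`aro step on $\diffent{Y_2^N|X_1}=\sum_{j=2}^N\diffent{Y_j|Y_2^{j-1},X_1}\le\diffent{Y_2^N}$ instead. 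Moreover, the monotonicity claim --- the one place where Markovity actually enters --- is asserted but never proved; the chain you need is $\diffent{Y_n|Y_2^{n-1},X_1}=\diffent{Y_{n+1}|Y_3^{n},X_2}=\diffent{Y_{n+1}|Y_3^{n},Y_2,X_2,X_1}\le\diffent{Y_{n+1}|Y_2^{n},X_1}$, where the first step is stationarity and the second uses both $Y_2=g(X_2)$ and the conditional independence of $X_3^{n+1}$ from $X_1$ given $X_2$. Without these two repairs the lower bound is not established.

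The paper takes the other route, the one you gesture at in your closing paragraph, and it is shorter: by Markovity, $\diffent{Y_n|Y_2^{n-1},X_1}=\diffent{Y_n|Y_2^{n-1},X_k^1}$ for every $k<1$; since $Y_k^{n-1}$ is a deterministic function of $(Y_2^{n-1},X_k^1)$, conditioning on the latter cannot yield more entropy, hence $\diffent{Y_n|Y_2^{n-1},X_1}\le\diffent{Y_n|Y_k^{n-1}}$; letting $k\to-\infty$, the right-hand side converges to $\derate{Y}$ by stationarity. This avoids both the Ces\`aro argument and the monotonicity lemma, replacing them by a single ``conditioning on more cannot increase entropy'' step applied to an enlarged conditioning set. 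Your version, once fixed, is a legitimate alternative in the spirit of the second proof of Cover--Thomas Thm.~4.5.1, but it does not simplify anything: the monotonicity proof uses exactly the same Markov-plus-function observation that the paper's one-line conditioning step uses, and you additionally need the Ces\`aro lemma. I would either complete your version with the two repairs above or switch to the infinite-past conditioning argument.
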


\begin{IEEEproof}
 The upper bound follows from the fact that conditioning reduces entropy, so we only have to show the lower bound. For this, note that by Markovity of $\Xvec$,
\begin{equation}
 \diffent{Y_n|Y_2^{n-1},X_1} = \diffent{Y_n|Y_2^{n-1},X_k^1}
\end{equation}
for all $k<1$. Let $U_k=(Y_2^{n-1},X_k^1)$ and $V_k=Y_k^{n-1}$. Obviously, there exists a function $f$ such that $V_k=f(U_k)$, namely the function which is the identity function on the first $n-2$, and the function $g$ on the last $2-k$ elements. By showing that
\begin{equation}
 \diffent{Y_n|U_k} \le \diffent{Y_n|V_k}
\end{equation}
the lower bound is proved by~\cite[Thm.~14.7]{Papoulis_Probability}
\begin{multline}
 \diffent{Y_n|Y_2^{n-1},X_1}=\lim_{k\to-\infty}\diffent{Y_n|U_k}\\
 \le \lim_{k\to-\infty}\diffent{Y_n|V_k}=\derate{Y}.
\end{multline}

Thus, we write
\begin{align}
& \diffent{Y_n|V_k}-\diffent{Y_n|U_k}\notag\\
&=\diffent{Y_n,V_k}-\diffent{V_k}-\diffent{Y_n,U_k}+\diffent{U_k}\\
&\stackrel{(a)}{=} \ent{U_k|V_k}-\expec{\log|\det\Jac{f}{U_k}|}\notag\\
&\quad{}-\ent{U_k,Y_n|V_k,Y_n}+\expec{\log|\det\Jac{f}{U_k}|}\\
&=\ent{U_k|V_k}-\ent{U_k|V_k,Y_n}\\
&\ge 0
\end{align}
where $(a)$ is due the multivariate extension of~\eqref{eq:lossdiffent} and since the determinant of the Jacobian matrix is the same for the function $f$, and for a function which applies $f$ to some, and the identity function to the rest of the elements. This completes the proof.
\end{IEEEproof}

We now turn to the
\begin{IEEEproof}[Proof of Proposition~\ref{prop:equalityThenLump}]
Note that the assumption implies that
\begin{multline}
 \int_{\dom{X}}\int_{\dom{Y}} f_{Y_2,X_1}(y,x) \log\left(\frac{f_{Y_2|X_1}(y|x)}{f_{Y_2|Y_1}(y|g(x))}\right)dydx\\
=\diffent{Y_2|Y_1}-\diffent{Y_2|X_1}=0
\end{multline}
which renders the upper bounds of Lemma~\ref{lem:boundsDERate} equal for $n=2$. Thus, $\derate{Y}=\diffent{Y_n|Y_1^{n-1}}=\diffent{Y_2|Y_1}$ for all $n$. By stationarity, 
\begin{align}
0&= \diffent{Y_n|Y_{n-1}}-\diffent{Y_n|Y_1^{n-1}}\\
&=\mutinf{Y_n;Y_1^{n-2}|Y_{n-1}}\\
&= \expec{ \log \left(\frac{f_{Y_n,Y_1^{n-2}|Y_{n-1}}(Y_1^n)}{f_{Y_n|Y_{n-1}}(Y_{n-1}^n)f_{Y_1^{n-2}|Y_{n-1}}(Y_1^{n-1})}\right)}\\
&= \expec{ \log \left(\frac{f_{Y_n|Y_1^{n-1}}(Y_1^n)}{f_{Y_n|Y_{n-1}}(Y_{n-1}^n)}\right)}\\
&= \expec{\kld{f_{Y_n|Y_1^{n-1}}(\cdot,Y_1^{n-1})}{f_{Y_n|Y_{n-1}}(\cdot,Y_{n-1})}}
\end{align}
where $\kld{\cdot}{\cdot}$ is the Kullback-Leibler divergence and where in the last line the expectation is taken w.r.t. $Y_1^{n-1}$. 

The expectation of a non-negative RV, such as the Kullback-Leibler divergence above, can only be zero if this RV is almost surely zero. Together with the fact that the Kullback-Leibler divergence between two PDFs vanishes iff the PDFs are equal almost everywhere, the assumption of the proposition implies that
\begin{equation}
f_{Y_n|Y_1^{n-1}}\stackrel{a.e.}{=}f_{Y_n|Y_{n-1}}
\end{equation}
$P_{Y_1^{n-1}}$-a.s. But this implies Markovity by Definition~\ref{def:Markov} (cf.~\eqref{eq:equalPDFs}) and completes the proof.
\end{IEEEproof}

\subsection{Proof of Corollary~\ref{cor:sufficientConditions}}
Note that~\eqref{eq:condEq} implies $f_{Y_2|X_1}(y_2|x)=f_{Y_2|X_1}(y_2|x')$ for all $x,x'$ within the support of $f_X$. Now
\begin{equation}
 f_{Y_2|Y_1}(y_2|y_1)=\frac{1}
	{ f_Y(y_1) }\sum_{x_1\in\preim{y_1}} \frac{f_{Y_2|X_1}(y_2|x_1)f_X(x_1)}{|g'(x_1)|}.
\end{equation}
Let $g_+^{-1}[y_1]:=\{x\in\preim{y_1}:f_X(x)>0\}$ and let $\hat{x}$ be an arbitrary element of this set. We proceed
\begin{align}
&f_{Y_2|Y_1}(y_2|y_1)\notag\\
&=\frac{1}{f_Y(y_1)}\sum_{x_1\in g_+^{-1}[y_1]} \frac{f_{Y_2|X_1}(y_2|x_1)f_X(x_1)}{|g'(x_1)|}\\
&\stackrel{(a)}{=} \frac{f_{Y_2|X_1}(y_2|\hat{x})}{f_Y(y_1)}\sum_{x_1\in g_+^{-1}[y_1]} \frac{f_X(x_1)}{|g'(x_1)|}\\
&=f_{Y_2|X_1}(y_2|\hat{x})
\end{align}
where $(a)$ is due to~\eqref{eq:condEq}. Since $f_{Y_2,X_1}=f_{Y_2|X_1}f_X$ we can apply Proposition~\ref{prop:equalityThenLump} to complete the first part of the proof.

For the second part, note that we have with Proposition~\ref{prop:equalityThenLump}
\begin{equation}
 \derate{Y}=\diffent{Y_2|X_1}
\end{equation}
and thus, with Proposition~\ref{prop:lossratePBFs} and~\eqref{eq:lossdiffent},
\begin{multline}
 \lossrate{X\to Y} = \diffent{X_2|X_1}-\diffent{Y_2|X_1}+\expec{\log|g'(X)|}\\
 = \ent{X_2|X_1,Y_2}.
\end{multline}
It remains to show that~\eqref{eq:tightEq} implies equality in~\eqref{eq:prop52ndeq} in the proof of Proposition~\ref{prop:UBMarkov}. To this end, observe that
\begin{equation}
 \ent{W_2|X_1}-\ent{W_2|X_1,Y_2} = \mutinf{W_2;Y_2|X_1}
\end{equation}
vanishes if for all $y\in\dom{Y}$ and all $x\in\dom{X}$ such that $f_X(x)>0$ and for all $w$ such that $\Prob{W_2=w|X_1=x}>0$
\begin{equation}
f_{Y_2|X_1}(y|x)=f_{Y_2|W_2,X_1}(y|w,x).
\end{equation}
But
\begin{equation}
f_{Y_2|W_2,X_1}(y|w,x)
=\frac{f_{X_2|X_1}(g_w^{-1}(y)|x)}{p(w|x)|g'(g_w^{-1}(y))|}
\end{equation}
where $p(w|x)=\Prob{W_2=w|X_1=x}$. Let, for a given $x$, $\hat{w}$ satisfy $p(\hat{w}|x)>0$.
The proof is completed by recognizing that
\begin{align}
&f_{Y_2|X_1}(y|x)\notag\\
&=\sum_w p(w|x) f_{Y_2|W_2,X_1}(y|w,x)\\
&=\sum_w \frac{f_{X_2|X_1}(g_w^{-1}(y)|x)}{|g'(g_w^{-1}(y))|}\\
&\stackrel{(a)}{=} \frac{f_{X_2|X_1}(g_{\hat{w}}^{-1}(y)|x)}{|g'(g_{\hat{w}}^{-1}(y))|}\sum_w [p(w|x)>0]\\
&=\frac{f_{X_2|X_1}(g_{\hat{w}}^{-1}(y)|x)}{|g'(g_{\hat{w}}^{-1}(y))|}
\ \card{\{w:p(w|x)>0\}}\\
&\stackrel{(b)}{=} \frac{f_{X_2|X_1}(g_{\hat{w}}^{-1}(y)|x)}{|g'(g_{\hat{w}}^{-1}(y))|} \frac{1}{p(\hat{w}|x)}\\
&=f_{Y_2|W_2,X_1}(y|\hat{w},x)
\end{align}
where $(a)$ is due to~\eqref{eq:tightEq:a} and $(b)$ is due to~\eqref{eq:tightEq:b}.
\endproof     

\subsection{Proof of Proposition~\ref{prop:RelLossgeRelLossRate}}
We start with showing $\relLoss{X\to Y} =P_X(\dom{X}_c)$. To this end, letting $\infodim{Z}$ denoting the R\'{e}nyi information dimension of $Z$~\cite{Renyi_InfoDim} and employing~\cite{Wu_Renyi,Smieja_EntropyMixture}, we write
\begin{equation}
 \infodim{X|Y=y}
 = \sum_{i=1}^K \infodim{X|Y=y,X\in\dom{X}_i} P_{X|Y=y}(\dom{X}_i)
\end{equation}
where $K=\card{\{\dom{X}_i\}}$. W.l.o.g., the partition is indexed such that the first $L$ elements correspond to subsets $\dom{X}_i$ on which $g$ is constant. Thus, $\dom{X}_c=\bigcup_{i=1}^L\dom{X}_i$. It follows for $i>L$, from the bijectivity of $g_i$, that $\infodim{X|Y=y,X\in\dom{X}_i}=0$. Moreover, if for $i\le L$ we have $X\in\dom{X}_i$, it follows that $Y=c_i$, and that thus
\begin{align}
 &\infodim{X|Y=y}\notag\\
 &= \sum_{i=1}^L \infodim{X|Y=y,X\in\dom{X}_i} P_{X|Y=y}(\dom{X}_i)\\
 &= \sum_{i=1}^L \infodim{X|X\in\dom{X}_i} P_{X|Y=y}(\dom{X}_i)\\
 &\stackrel{(a)}{=} \sum_{i=1}^L P_{X|Y=y}(\dom{X}_i)\\
 &\stackrel{(b)}{=} P_{X|Y=y}(\dom{X}_c).
\end{align}
where $(a)$ is due to the fact that the RV $X$ restricted to the non-empty set $\dom{X}_i$ possesses a density and $(b)$ follows from the fact that the partition consists only disjoint sets.
\begin{figure*}
\begin{equation}\label{eq:longFormula}
 \begin{split}
 &\relLoss{X_1^n\to Y_1^n}= P_{X_1^n}(\dom{X}_c^n)\\
& +\frac{n-1}{n}P_{X_1^n}(\dom{X}_c^{n-1}\times\overline{\dom{X}_c})+\frac{n-1}{n}P_{X_1^n}(\dom{X}_c^{n-2}\times\overline{\dom{X}_c}\times\dom{X}_c)+\cdots + \frac{n-1}{n}P_{X_1^n}(\overline{\dom{X}_c}\times\dom{X}_c^{n-1})\\
&\vdots\\
&+\frac{1}{n}P_{X_1^n}(\dom{X}_c\times\overline{\dom{X}_c}^{n-1})+\frac{1}{n}P_{X_1^n}(\overline{\dom{X}_c}\times\dom{X}_c\times\overline{\dom{X}_c}^{n-2})+\cdots+\frac{1}{n}P_{X_1^n}(\overline{\dom{X}_c}^{n-1}\times\dom{X}_c)\\
&+\frac{0}{n}P_{X_1^n}(\overline{\dom{X}_c}^{n})
\end{split}
\end{equation}
\hrule
\end{figure*}
We now combine $\infodim{X}=1$ and
\begin{equation}
 \infodim{X|Y}=\int_\dom{Y} \infodim{X|Y=y} dP_Y(y) = P_{X}(\dom{X}_c)
\end{equation}
with the fact that $\relLoss{X\to Y}=\frac{\infodim{X|Y}}{\infodim{X}}$~\cite{Geiger_LossLong} and obtain the first part of the proof.

Now take a finite sequence $X_1^n$ obtained from the stochastic process $\Xvec$ and look at the relative information loss incurred in $g$. Similarly as in the proof of Proposition~\ref{prop:WBound}, $g^n$ induces a finite partition of $\dom{X}^n$. Moreover, for every element of this partition, $g^n$ is a composition of a bijective, differentiable function and, possibly, a projection. We can thus apply the result about dimensionality reduction presented in~\cite{Geiger_LossLong} which leads to~\eqref{eq:longFormula} where $\overline{\dom{X}_c}=\dom{X}\setminus\dom{X}_c$. Compactly written, we get
\begin{equation}
 \relLoss{X_1^n\to Y_1^n}=\frac{1}{n}\sum_{i=1}^n i \Prob{\card{\{X_j\in X_1^n:X_j\in\dom{X}_c\}}=i}.
\end{equation}
Defining
\begin{equation}
 V_n:=\begin{cases}
      1,& \text{if } X_n\in\dom{X}_c\\0,&\text{else}
     \end{cases}
\end{equation}
and $Z_n:=\sum_{j=1}^n V_j$, and with the linearity of expectation we get
\begin{align}
 &\relLoss{X_1^n\to Y_1^n}\notag\\
&=\frac{1}{n}\sum_{i=1}^n i \Prob{\sum_{j=1}^n V_j=i}\\
&=\frac{1}{n}\sum_{i=1}^n i \Prob{Z_n=i}\\
&=\frac{1}{n}\expec{Z_n}\\
&=\frac{1}{n}\sum_{j=1}^n \expec{V_j}\\
&\stackrel{(a)}{=}\expec{V}\\
&= P_X(\dom{X}_c)
\end{align}
where $(a)$ is due to stationarity of $\Xvec$. This completes the proof.
\endproof                                                        
}{}

\bibliographystyle{IEEEtran}
\bibliography{IEEEabrv,/afs/spsc.tugraz.at/project/IT4SP/1_d/Papers/InformationProcessing.bib,%
/afs/spsc.tugraz.at/project/IT4SP/1_d/Papers/ProbabilityPapers.bib,%
/afs/spsc.tugraz.at/user/bgeiger/includes/textbooks.bib,%
/afs/spsc.tugraz.at/user/bgeiger/includes/myOwn.bib,%
/afs/spsc.tugraz.at/user/bgeiger/includes/UWB.bib,%
/afs/spsc.tugraz.at/project/IT4SP/1_d/Papers/InformationWaves.bib,%
/afs/spsc.tugraz.at/project/IT4SP/1_d/Papers/ITBasics.bib,%
/afs/spsc.tugraz.at/project/IT4SP/1_d/Papers/HMMRate.bib,%
/afs/spsc.tugraz.at/project/IT4SP/1_d/Papers/ITAlgos.bib}

\begin{thebibliography}{10}
\providecommand{\url}[1]{#1}
\csname url@samestyle\endcsname
\providecommand{\newblock}{\relax}
\providecommand{\bibinfo}[2]{#2}
\providecommand{\BIBentrySTDinterwordspacing}{\spaceskip=0pt\relax}
\providecommand{\BIBentryALTinterwordstretchfactor}{4}
\providecommand{\BIBentryALTinterwordspacing}{\spaceskip=\fontdimen2\font plus
\BIBentryALTinterwordstretchfactor\fontdimen3\font minus
  \fontdimen4\font\relax}
\providecommand{\BIBforeignlanguage}[2]{{%
\expandafter\ifx\csname l@#1\endcsname\relax
\typeout{** WARNING: IEEEtran.bst: No hyphenation pattern has been}%
\typeout{** loaded for the language `#1'. Using the pattern for}%
\typeout{** the default language instead.}%
\else
\language=\csname l@#1\endcsname
\fi
#2}}
\providecommand{\BIBdecl}{\relax}
\BIBdecl

\bibitem{Oppenheim_Discrete3}
A.~V. Oppenheim and R.~W. Schafer, \emph{Discrete-Time Signal Processing},
  3rd~ed.\hskip 1em plus 0.5em minus 0.4em\relax Upper Saddle River, NJ:
  Pearson Higher Ed., 2010.

\bibitem{Price_NLGaussianInput}
R.~Price, ``A useful theorem for nonlinear devices having {Gaussian} inputs,''
  \emph{{IEEE} Trans. Inf. Theory}, vol.~4, pp. 69--72, Jun. 1958.

\bibitem{Cover_Information2}
T.~M. Cover and J.~A. Thomas, \emph{Elements of Information Theory},
  2nd~ed.\hskip 1em plus 0.5em minus 0.4em\relax Hoboken, NJ: Wiley
  Interscience, 2006.

\bibitem{Pinsker_InfoEngl}
M.~S. Pinsker, \emph{Information and Information Stability of Random Variables
  and Processes}.\hskip 1em plus 0.5em minus 0.4em\relax San Francisco, CA:
  Holden Day, 1964.

\bibitem{Watanabe_InfoLoss}
S.~Watanabe and C.~T. Abraham, ``Loss and recovery of information by coarse
  observation of stochastic chain,'' \emph{Information and Control}, vol.~3,
  no.~3, pp. 248--278, Sep. 1960.

\bibitem{Geiger_NLDyn1starXiv}
B.~C. Geiger and G.~Kubin, ``Some results on the information loss in dynamical
  systems,'' in \emph{Proc. IEEE Int. Sym. Wireless Communication Systems
  (ISWSC)}, Aachen, Nov. 2011, pp. 794--798, extended version available: {\tt
  arXiv:1106.2404 [cs.IT]}.

\bibitem{Geiger_LossLong}
------, ``Information measures for deterministic input-output systems,'' Mar.
  2013, submitted; preprint available: {\tt arXiv:1303.6409 [cs.IT]}.

\bibitem{Kemeny_FMC}
J.~G. Kemeny and J.~L. Snell, \emph{Finite Markov Chains}, 2nd~ed.\hskip 1em
  plus 0.5em minus 0.4em\relax Springer, 1976.

\bibitem{Papoulis_Probability}
A.~Papoulis and U.~S. Pillai, \emph{Probability, Random Variables and
  Stochastic Processes}, 4th~ed.\hskip 1em plus 0.5em minus 0.4em\relax New
  York, NY: McGraw Hill, 2002.

\bibitem{Burke_MarkovFunction}
C.~J. Burke and M.~Rosenblatt, ``A {Markovian} function of a {Markov} chain,''
  \emph{Ann. Math. Stat.}, vol.~29, no.~4, pp. 1112--1122, Dec. 1958.

\bibitem{Rogers_MarkovFunct}
L.~C.~G. Rogers and J.~W. Pitman, ``Markov functions,'' \emph{Ann. Prob.},
  vol.~9, no.~4, pp. 573--582, 1981.

\bibitem{Rosenblat_Markov}
M.~Rosenblatt, ``Functions of a {Markov} process that are {Markovian},'' in
  \emph{Selected Works of Murray Rosenblatt}, ser. Selected Works in
  Probability and Statistics, R.~A. Davis, K.-S. Lii, and D.~N. Politis,
  Eds.\hskip 1em plus 0.5em minus 0.4em\relax Springer, 2011, pp. 134--146.

\bibitem{Moddemeijer_Matlab}
\BIBentryALTinterwordspacing
R.~Moddemeijer. (2010, June) \BIBforeignlanguage{English}{Matlab library}.
  [Online]. Available: \url{http://www.cs.rug.nl/~rudy/matlab/}
\BIBentrySTDinterwordspacing

\bibitem{Dumitrescu_EntropyInvariance}
M.~Dumitrescu and G.~Popovici, ``Entropy invariance for autoregressive
  processes constructed by linear filtering,'' \emph{International Journal of
  Computer Mathematics}, vol.~88, no.~4, pp. 864--880, Mar. 2011.

\bibitem{Doob_StochasticProcesses}
J.~L. Doob, \emph{Stochastic Processes}, ser. Wiley Classics Library.\hskip 1em
  plus 0.5em minus 0.4em\relax New York, NY: Wiley Interscience, 1990.

\bibitem{GeigerTemmel_kLump}
B.~C. Geiger and C.~Temmel, ``Lumpings of {Markov} chains and entropy rate
  loss,'' Dec. 2012, submitted to J. Appl. Prob.; preprint available: {\tt
  arXiv:1212.4375 [cs.IT]}.

\bibitem{Renyi_InfoDim}
A.~R\'{e}nyi, ``On the dimension and entropy of probability distributions,''
  \emph{Acta Mathematica Hungarica}, vol.~10, no. 1-2, pp. 193--215, Mar. 1959.

\bibitem{Wu_Renyi}
Y.~Wu and S.~Verd\'{u}, ``R\'{e}nyi information dimension: Fundamental limits
  of almost lossless analog compression,'' \emph{{IEEE} Trans. Inf. Theory},
  vol.~56, no.~8, pp. 3721--3748, Aug. 2010.

\bibitem{Smieja_EntropyMixture}
M.~\'{S}mieja and J.~Tabor, ``Entropy of the mixture of sources and entropy
  dimension,'' \emph{{IEEE} Trans. Inf. Theory}, vol.~58, no.~5, pp.
  2719--2728, May 2012.

\end{thebibliography}


\end{document}